\newtheorem{definition}{Definition}[section]
\newtheorem{theorem}[definition]{Theorem}
\newtheorem{corollary}[definition]{Corollary}
\newtheorem{claim}[definition]{Claim}
\newtheorem{remark}[definition]{Remark}
\newcommand{\qed}{\hfill $\square$\smallskip}
\title{Fake news and rumors: a trigger for proliferation or fading away}
\author{Ahad N. Zehmakan\thanks{Corresponding author; Email Address: abdolahad.noori@inf.ethz.ch, Postal Address: CAB G 39.3, Institute of Theoretical Computer Science, ETH Z\"urich, Universit\"atstrasse 6, CH-8092 Z\"urich.}}
\affil{Department of Computer Science, ETH Zurich}
\author{Serge Galam}
\affil{CEVIPOF-Centre for Political Research, Sciences Po and CNRS}
\providecommand{\keywords}[1]{\textbf{\textit{Key Words:}} #1}
\date{} 
\begin{document}
\maketitle
\begin{abstract}
The dynamics of fake news and rumor spreading is investigated using a model with three kinds of agents who are respectively the Seeds, the Agnostics and the Others. While Seeds are the ones who start spreading the rumor being adamantly convinced of its truth, Agnostics reject any kind of rumor and do not believe in conspiracy theories. In between, the Others constitute the main part of the community. While Seeds are always Believers and Agnostics are always Indifferents, Others can switch between being Believer and Indifferent depending on who they are discussing with. The underlying driving dynamics is implemented via local updates of randomly formed groups of agents. In each group, an Other turns into a Believer as soon as $m$ or more Believers are present in the group. However, since some Believers may lose interest in the rumor as time passes by, we add a flipping fixed rate $0<d<1$ from Believers into Indifferents. Rigorous analysis of the associated dynamics reveals that switching from $m=1$ to $m\ge2$ triggers a drastic qualitative change in the spreading process. When $m=1$ even a small group of Believers may manage to convince a large part of the community very quickly. In contrast, for $m\ge 2$, even a substantial fraction of Believers does not prevent the rumor dying out after a few update rounds. Our results provide an explanation on why a given rumor spreads within a social group and not in another, and also why some rumors will not spread in neither groups.

\end{abstract}
\keywords{rumor spreading, bootstrap percolation, Galam model, threshold model.}

\section{Introduction}
Last years have witnessed the emergence of the new phenomenon denoted ``fake news'', which has become a worldwide major concern for many actors of political life. In particular, the impact of fake news on twisting democratic voting outcomes has been claimed repeatedly to explain unexpected voting outcomes as Brexit and Trump victories. Fake news have been also identified during the 2017 presidential French campaign. 

Fake news has turned to an important form of social communications, and their spread plays a significant role in a variety of human affairs. It can have a significant impact on people lives, distorting scientific facts and establishment of conspiracy theories.. 

That has triggered the temptation in many countries to curb actual total and anonymous free speech in Internet by setting up new regulations to hinder the political influence of fake news. However, up to date no solid evidence has been found to demonstrate that fake news have indeed reverse a voting outcome. It seems, that fake news are spreading among people having already made their political choice. In any case, to fight against fake news phenomena using the implementation of new judicial regulations it is of importance to understand the mechanisms at work in their spreading. What eventually matters is not the production of fake news but their capacity to spread quickly and massively. It is therefore essential to understand the spreading dynamics.

However, before challenging that goal we claim that the novelty of fake news is not their nature and content but the support of its spreading, namely Internet and social networks. Indeed, fake news are identical to the old category of rumors, which were and still are spread through the process of word-of-mouth. Fake news have created a qualitative change of scale with respect to rumors thank to the new technologies, which have democratized the production and reproduction of information, increasing drastically the rate at which misinformation can spread. Therefore, control and possible handling to manipulate information are now major issues in social organizations including economy, politics, defense, fashion, even personal affairs. The issue of fake news (rumors) spreading has become of a strategic importance at all levels of society. 

Accordingly, we feel that to reach the core of the phenomenon it is more feasible to focus on the phenomenon of rumor spreading carried on by the word-of-mouth. We address the issue by building a mathematical model along an already rich path of contributions from researchers with wide spectrum of backgrounds, like political sciences, statistical physics, computer science, and mathematics. 


In our proposed model, three kinds of agents are considered, the Seeds, the Agnostics and the Others.
The \emph{Seeds} are the initial spreaders of the rumor. They are adamantly convinced the rumor is true and no argument can make them renouncing to it. They trigger the spreading and keep on trying to convince Others to believe in the rumor.
The \textit{Others} constitute the main body of the community. They are either \textit{Believers}, people who are convinced the rumor is true, or \textit{Indifferents}, people who do not believe the rumor is true. Indifferents can turn into Believers if given convincing counter arguments. On the other hand, over time Believers might lose their belief and become Indifferents. 
The \emph{Agnostics} are not concerned about the rumor. In addition, they do not believe in any rumors and reject conspiracy theories as a whole. They are always Indifferents to any rumor . 

The dynamics is studied rigorously using local updates in groups formed randomly through informal social gatherings occurring in offices, houses, bars, and restaurants at time break meetings like lunch, happy hours and dinner. Outcomes of those local updates depend on the number $m$ of actual Believers in the group. A Seed contribute to the update as a Believer while an Agnostic acts as an Indifferent. The faith of the spreading is found to depend drastically on the value of $m$ in addition to the actual proportions of Seeds and Agnostics.

The rest of the paper is organized as follows. In Section~\ref{model}, we introduce our model. Section 3 reviews prior models of rumors propagation with which our model shares some attributes. Finally, in Section~\ref{results} we provide our results by the rigorous analysis of the model.

\section{The $\textbf{m}$-rumor spreading model}
\label{model}
In \emph{$m$-rumor spreading model}, we consider a community with $N$ agents, which includes proportions of respectively $s$ of Seeds, $bN$ of Agnostics, and thus $(1-b)N-s$ Others. At each discrete-time round $t\geq 1$, agents gather randomly in fixed rooms of different sizes from 2 to a constant $L\ge 2$, where the numbers of seats in all rooms sum up to $N$, and then the agents update their opinion as follows:
\begin{itemize}
\item An Other who is Indifferent and is in a room with at least $m$ ($m\ge 1$) Believers becomes a Believer 
\item An Other, who is a Believer, becomes Indifferent independently with a probability $0<d<1$.
\item An Agnostic remains always an Indifferent.
\item A Seed remains always a Believer.
\end{itemize}

Figure~\ref{fig1} illustrates one round of the process where a black/white square corresponds to a Seed/Agnostic and a black/white circle corresponds to a Believer/Indifferent. 

\begin{figure}[!ht]
\centering
\includegraphics[scale=0.7]{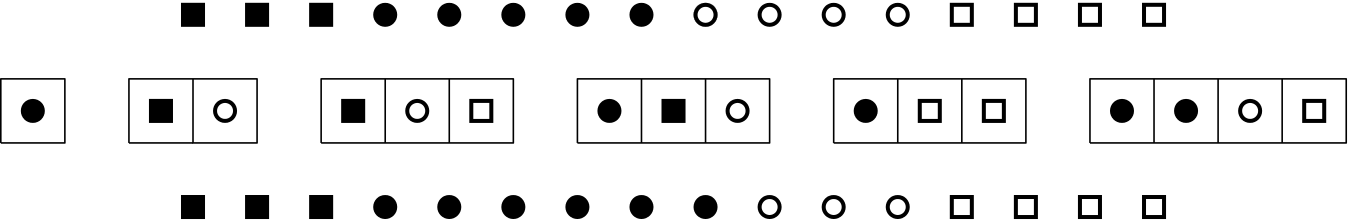}
\caption{One round of the process for $m=2$, $N=16$, $b=0.25$, and $d=0.1$. Two Indifferents turn Believers, but one Believer becomes Indifferent for $d=0.1$.}
\label{fig1}
\end{figure}
Assume that initially $N_0$ individuals are Believers. We are interested in the sequence $N_1, N_2,\cdots$, where the random variable $N_t$ for $t\ge 1$ is the number of Believers after round $t$. 
 
\subsection{Assumptions} \label{assumptions} The value of $m$ is a function of different parameters, like the persuasiveness of the rumor or the tendency of the community in believing that. We assume that $m$ is a fixed positive integer. Recall that there are $s$ Seeds, who initiate the rumor. We assume that $s$ is at least as large as $m$ since $m$ Believers are needed to turn an Indifferent to a Believer. Furthermore, we always suppose that $s$ is a constant while let $N$ tend to infinity. That is, Seeds are a very small group. Moreover, $b$ is assumed to be a small constant, say $0.25$ which implies that $25\%$ of the community are Agnostics. We also assume that $d$ is a fixed small constant, say $0.1$, which means that a Believer loses interest in the rumor after $10$ rounds in expectation and becomes an Indifferent. In the present paper, our main focus is devoted to demonstrate how switching from $m=1$ to $m\ge 2$ triggers a drastic qualitative change in the spreading process, which consequently shed some light on the outcome of some real-world elections. Thus for our purpose, it is realistic to assume that $b\le 0.25$ and $d\le 0.1$.\footnote{We should mention that the values $0.25$ and $0.1$ are chosen to make our calculations more straightforward; otherwise, our results hold also for larger values of $b$ and $d$, say $b\le 0.3$ and $d\le 0.2$. However, this does not apply to any value of $d$ and $b$ as we discuss in Section~\ref{illustration}.} However, it is definitely desirable to analyze the $m$-rumor spreading model for larger values of $d$ and $b$. We provide some intuition and analytic explanations in Section~\ref{illustration}, but the rigorous analysis of the process in this setting is left for future work.

For the sake of simplicity we suppose that all rooms are of size $r$ for some constant $r$. However, our results carry on the general setting with rooms of different sizes by applying basically the same proof ideas. 

Finally, we assume that $r>m$ because otherwise no Indifferent could become a Believer. Note that for an Indifferent to become a Believer, it must share a room with at least $m$ Believers which is not possible for $r\le m$.
\subsection{Markov Chain}
The $m$-rumor spreading model defines a Markov chain on state space $S=\{j:s\leq j\leq (1-b)n\}$, where state $j$ corresponds to having $j$ Believers. (Note that we rule out states $j<s$ and $j>(1-b)n$ since there are $s$ Seeds and $bN$ Agnostics). Furthermore, the transition probability $P_{jj'}$, for $j,j'\in S$, is the probability of having $j'$ Believers in the next round given there are $j$ Believers in the current round. 

Assume that we have $j$ Believers for some $s\le j<(1-b)N$. With some non-zero probability, in the next round at least one Indifferent becomes a Believer and all Believers remain unchanged. Thus, there is a non-zero probability to reach state $(1-b)N$ from state $j$ for any $s\le j\le (1-b)N$. Furthermore, since with some non-zero probability all Believers choose to become Indifferent in the next round, except Seeds\footnote{This is trivial since Seeds remain Believers forever by definition. Thus, from now on we will avoid mentioning it.}, there is a non-zero probability to reach state $s$ from state $j$ for any $s\le j\le (1-b)N$. Therefore, the process eventually reaches state $s$, where only Seeds believe in the rumor, or it reaches some state $j\ge N/2$, where at least half of the community believe in the rumor. We say the rumor \emph{dies out} in the first case and we say it \emph{takes over} in the second one.

\subsection{Our contribution}
Having the $m$-rumor spreading model in hand, two natural questions arise: what are the conditions for which a rumor takes over or dies out? and how fast does this happen? The main goal of the present paper is to address these two basic questions.

First we consider the $m$-rumor spreading process for $m=1$ and prove that if $N_0\ge s$, then the rumor takes over in $\mathcal{O}(\log N)$ rounds asymptotically almost surely (a.a.s.)\footnote{We say an event happens asymptotically almost surely whenever it happens with probability $1-o(1)$ while we let $N$ tend to infinity.}. This implies that if even initially only Seeds, which are a group of constant size, believe in the rumor, for $m=1$ the rumor takes over in logarithmically many rounds. However, by switching from $m=1$ to $m\ge 2$ a very different picture emerges. We prove that for $m\ge 2$ the rumor dies out in $\mathcal{O}(\log N)$ rounds a.a.s. if $N_0\le \alpha N$ for some sufficiently small constant $\alpha>0$. This means if even initially a constant fraction of the community are Believers, for $m\ge 2$ the rumor dies out. Therefore, a drastic change occurs in the behavior of the process by switching from $m=1$ to $m\ge 2$, i.e., the process exhibits a threshold behavior. 

Let us illustrate this threshold behavior by a numerical example. Consider a community of size $N=10^6$ and assume that $b=0.25$, $d=0.1$, $r=3$, and $s=100$. Let $f(n)$ denote the expected number of Believers in the next round by assuming that there are $n$ Believers in the current round (we will provide an exact formula for $f(n)$ in Section~\ref{results}, Equation~(\ref{eq 1})). One can observe in Figure~\ref{fig2} (left) that $f(n)>n$ for $s=100\le n\le 500,000=N/2$. Therefore, by starting from 100 Believers, in each round the number of Believers increases in expectation until there are $N/2$ Believers, i.e., the rumor takes over. In Figure~\ref{fig2} (right), we present the expected percentage of Believers in the $t$-th round for $0\le t\le 20$ by assuming that $N_0=100$. It demonstrates that if initially only $0.01\%$ of the community are Believers, the expected percentage of Believers after 15 rounds is almost equal to $67.5\%$, which implies that we expect the rumor to take over in less than 15 rounds. Note that the Believers can constitute at most $75\%$ of the community, but in that case $10\%$ of them, which is $7.5\%$ of the whole community, would turn into Indifferent in expectation since $d=0.1$. This may explain where the value $67.5\%$ comes from.
\begin{figure}[!ht]
\centering
\includegraphics[scale=0.5]{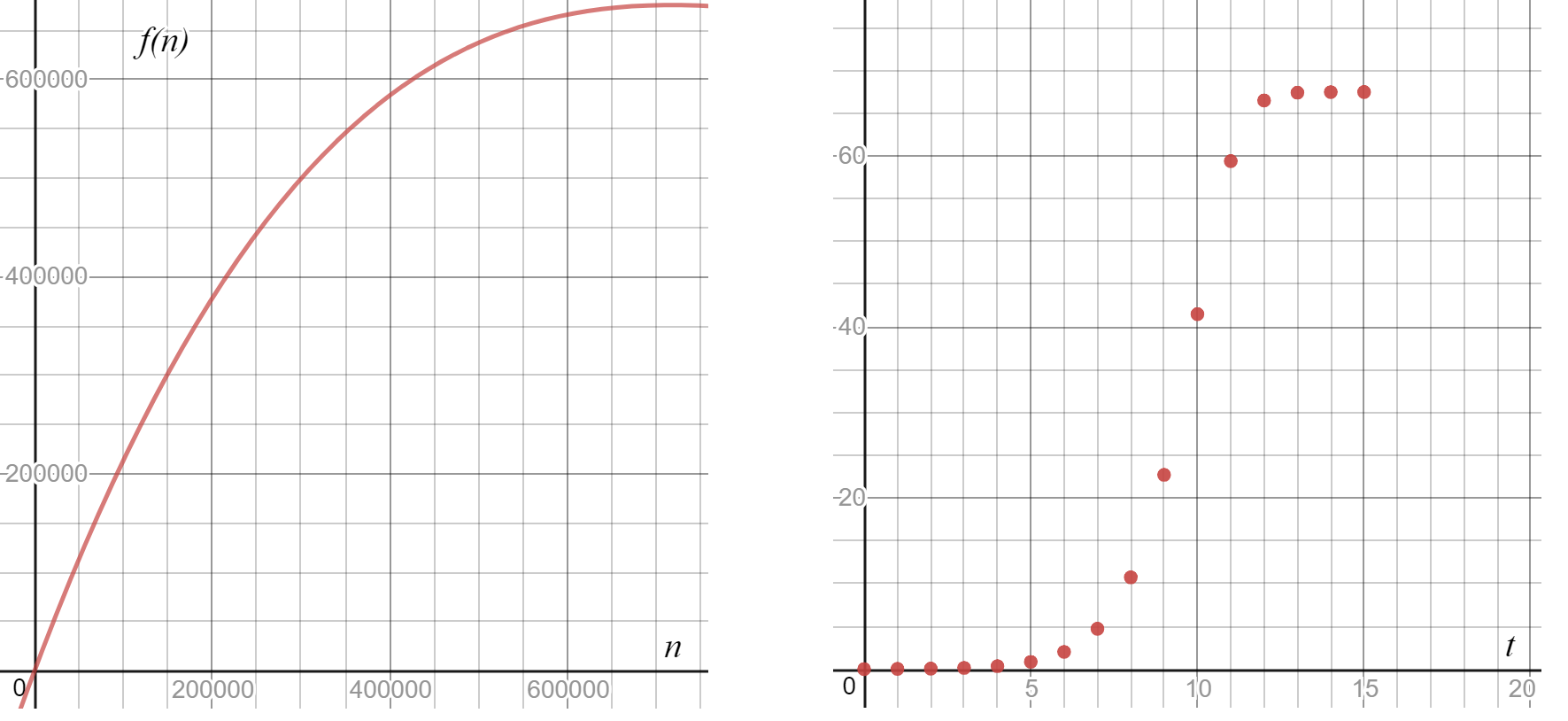}
\caption{(left) Variation of $f(n)$ (right) the expected percentage of Believers in round $t$ given $N_0=100$.}
\label{fig2}
\end{figure}

Now, we consider the same setting but for $m=2$. Figure~\ref{fig3} (left) illustrates that $f(n)<n$ for $s=100\le n\le 100,000=N/10$. We also provide the variation of function $f(n)-n$ in Figure~\ref{fig3} (middle), where it is easier to spot that $f(n)<n$. Thus, if initially even $10\%$ of the community believe in the rumor, we expect the number of Believers decreases in each round until only the $100$ Seeds believe in the rumor, i.e., it dies out. In Figure~\ref{fig3} (right), the expected percentage of Believers in the $t$-th round for $0\le t\le 69$ are drawn by assuming that $N_0=100,000$. This explains that if initially even $10\%$ of the community are Believers, the rumor still dies out in 69 rounds in expectation. 
\begin{figure}[!ht]
\centering
\includegraphics[scale=0.4]{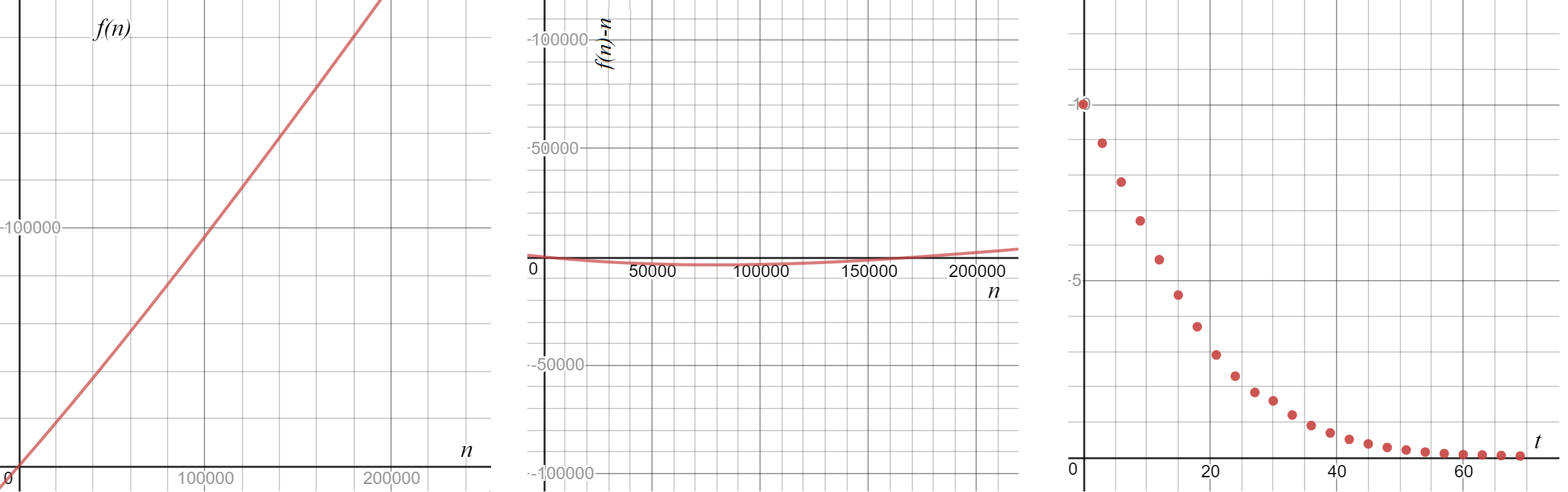}
\caption{(left) Variation of $f(n)$ (middle) variation of $f(n)-n$ (right) the expected percentage of Believers in round $t$ given $N_0=100,000$.}
\label{fig3}
\end{figure}


Let us briefly discuss the intuition behind such a change in the behavior of $m$-rumor spreading process. Assume that initially Believers constitute a small fraction of the community. In that case, it is a very unlikely that a room includes two or more Believers. Therefore, if $m=2$, then almost no ``new'' Believer is generated, i.e., no Indifferent become a Believer. Furthermore, each Believer becomes Indifferent with probability $d$. Thus, we expect the number of Believers to decrease by a constant factor. On the other hand, if $m=1$, then all Believers who are in different rooms can single-handedly turn all the agents in their room into Believers. Thus, the number of Believers increases by an $r$ factor. (Actually, this is a bit smaller since we did not take Agnostics into account.) Note that in this case also a $d$ fraction of the Believers become Indifferent in expectation, but this is negligible since we assume that $d$ is in order of $0.1$ but $r\ge 2$. Therefore, the number of Believers increases by a constant factor, in expectation.

\subsection{Proof techniques.} 
We analyze the model rigorously applying standard tools and techniques from probability theory. We compute the formula for the expected number of Believers as a function of the model parameters and the number of Believers in the previous round. This formula turns out to be quite involved. Thus, instead we work with suitable upper and lower bounds, which are easier to handle. Building on these bounds, we show that if $m=1$, in expectation the number of Believers increases by a constant factor in each round until the rumor takes over. To turn such an expectation based argument into an a.a.s. statement, we exploit classical concentration inequalities, like the Chernoff bound and Azuma's inequality. Since the number of Believers increases by a constant factor in each round, the rumor takes over in $\mathcal{O}(\log N)$ rounds. We prove that this bounds is asymptotically tight; that is, for some cases a.a.s. the rumor needs $\Omega(\log N)$ rounds to take over. A similar argument holds for the case of $m\ge 2$, where in each round the number of Believers decreases by a constant factor until the rumor dies out.

\subsection{Real cases ground} \label{real cases} Within our model, for a given rumor, its success or failure in spreading depends merely on the value of $m$. The value $m=1$ drives viral the rumor while it fades quickly for $m\ge 2$. The actual value of $m$ is expected to be a function of both the rumor content and the social characteristics of the community in which it is launched. Therefore the same rumor may spread in a community and fade away in another. And of two different rumors within a community, one may spread and the other vanishes.

In \cite{galam2003modelling} those differences were given an explanation in terms of the existence of different prejudices and cognitive biases, which were activated at ties in even discussing groups with otherwise a local majority rule update. Here we provide another explanation for this phenomena, based on the value of $m$ in the discussing group, which does not obey a local majority rule for shifting opinions. It is worth to mention that our current approach could in principle be extended to account for a mixture of arguments with different convincing power leading to a combination of groups with $m=1$ and $m\ge 2$. The $m=1$ case embodies the promptness of an agent to believe in a given rumor or to the strength of the argument. The same agent may requires $m\ge 2$ to adopt another rumor as true either due to weaker arguments or the content of the rumor.

The fact that the actual value of $m$ can vary from one social group to another with respect to the same rumor can be illustrated using some recent real cases of differentiated rumor spreading. One significant case relates to the terrorist attack in Strasbourg, France, on the eve of December 11, 2018. The terrorist, a follower of Islamic State, claimed five person lives and wounded eleven persons, before being shot by the police a few days latter. With the shooting occurring amid the uprising of the Yellow Jackets (Gilets Jaunes) movement, a conspiracy theory emerged at once and spread quite quickly among a specific community of French people, namely the ones identifying with the Yellow Jacket movement, claiming the attack was set up by the French government to distract support to the on going social movement. At the same time the rumor did not take over among the rest of the French population as shown by the following poll figures. The statement? Evidences are not clear about who committed the attack or the attack was a set up by French government? was found to be agreed on by 42\% of yellow jacket activists and 23\% of yellow jacket supporters against only 11\% within non supporters of yellow jackets. 

A similar feature is exhibited with an American case about supposed Russian interference in last American Presidential election. More specifically, attributing the hacking of Democratic e-mails to Russia was supported by 87\% of Clinton voters against only 20\% of Trump voters. 

\section{Prior Work}
\label{prior work}
As mentioned, the question of how a rumor spreads in a community has been studied extensively in different contexts and numerous models have been introduced and investigated, both theoretically and experimentally, see e.g.~\cite{nowak2019homogeneous,pires2017dynamics,wang2017rumor,merlone2014reaching,borge2013emergence,rabajante2011mathematical}. We shortly discuss some results regarding Galam model, DK model, and bootstrap percolation model, which are arguably the closest ones to ours. 

\subsection{Bootstrap percolation model} 
Consider the relationship network between the people in a community and assume that initially each agent is a Believer or an Indifferent toward a rumor. In $m$-bootstrap percolation, an Indifferent becomes a Believer as soon as $m$ of its connections in the network are Believers and remains a Believer forever. The main question in this context is to determine the minimum number of agents who must be Believers initially to guarantee that the rumor takes over the whole network eventually (that is, all agents become Believers). This has been studied on different network structures, like hypercube~\cite{balogh2010bootstrap}, lattice~\cite{jeger2019dynamic}, random graphs~\cite{gartner2018majority,n2018opinion}, and many others. In contrast to the bootstrap percolation model, which assumes that the agents are sitting on the nodes of an underlying network, in our model agents gather in groups of different sizes at random. In this sense, our model is more similar to Galam model, introduced in the next section.

\subsection{Galam model}
The Galam model was introduced by Galam~\cite{galam2002minority} to describe the spreading of a minority opinion resulting in democratic rejection of social reforms initially favored by a majority. In this model, initially each agent is positive or negative regarding a reform proposal. Then, in each discrete-time round they are distributed randomly in groups of different sizes and everyone adopts the most frequent opinion in its group; in case of a tie, negative opinion is chosen. Building on this model, Galam~\cite{galam2002minority} provided some illustrations regarding the output of some real-world elections. Later, several extensions of the model were proposed and studied, for example by adding contrarian effect~\cite{galam2004contrarian}, introducing random tie-breaking rule~\cite{galam2005heterogeneous,gartner2017phase}, considering three competing opinions~\cite{gekle2005opinion}, and defining the level of activeness~\cite{qian2015activeness}. 

\subsection{DK model}
Another well-established rumor spreading model is DK model, introduced by Daley and Kendall~\cite{daley1965stochastic}. There are essentially two differences between DK model and our model. Firstly in DK model, it is assumed that the gatherings are always of size two. Furthermore, a Believer stops spreading the rumor as soon as it encounters another Believer, in DK model. The illustration for such an updating rule is that the agent decides that it is no longer ``news'', and thus stops spreading it. In that sense, our model is perhaps more similar to the variant introduced by Zhao et al.~\cite{zhao2011rumor}, where a Believer becomes Indifferent independently with some probability $d>0$. 
\section{Rigorous demonstration of the $\textbf{m}$-rumor spreading}
\label{results} 
In this section we rigorously analyze the behavior of the $m$-rumor spreading model. The goal is to prove Theorem~\ref{theorem 1} which is the main contribution of the present paper.

\begin{theorem}
\label{theorem 1}
In the $m$-rumor spreading model
\begin{enumerate}[(i)]
\item  for $m=1$ if $N_0\ge s$, the rumor takes over
\item for $m=2$ if $N_0\le \alpha N$, the rumor dies out
\end{enumerate}
asymptotically almost surely in $\mathcal{O}(\log N)$ rounds, where $\alpha>0$ is a sufficiently small constant. 
\end{theorem}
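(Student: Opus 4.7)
The plan is a drift-and-concentration argument, applied inductively for $\Theta(\log N)$ rounds. In each of the two cases I identify a constant-factor multiplicative drift of the Believer count $N_t$ (growth for $m=1$, decay for $m=2$), show that $N_{t+1}$ concentrates tightly around its conditional expectation, and then iterate via a union bound to convert the expectation estimate into an asymptotically almost sure statement.

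First I would compute $f(n):=\mathbb{E}[N_{t+1}\mid N_t=n]$ by splitting the next round into (a) the Indifferent Others who land in a room containing at least $m$ Believers and (b) the currently Believer Others who survive the independent Bernoulli$(d)$ flip, plus the $s$ Seeds. Part (b) contributes $(1-d)(n-s)$; part (a) is a sum of hypergeometric tail probabilities of the form $\sum_{k\ge m}\binom{n}{k}\binom{N-n}{r-k}/\binom{N}{r}$ per Indifferent Other. The exact expression is unwieldy, so I would replace it by tight upper and lower bounds, using $1-(1-n/N)^{r-1}\approx (r-1)n/N$ in the regime $n\ll N$. This gives, for $m=1$ and $s\le n\le (1-b)N/(1+\varepsilon)$, $f(n)\ge(1+c_1)n$ for some constant $c_1>0$ whenever $r\ge 2$, $b\le 1/4$, $d\le 1/10$; and for $m=2$ and $s\le n\le\alpha N$, $f(n)\le(1-c_2)n$ for some $c_2>0$, provided $\alpha$ is small enough that the gain, of order $\binom{r-1}{2}n^2/N$ because two Believers must coincide in a room, is dominated by the linear loss $d(n-s)$.

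Next I would establish concentration of $N_{t+1}$ about $f(n)$. Conditional on $N_t=n$, the randomness comes from the uniform random partition of the $N$ agents into rooms of size $r$ and from the $n-s$ independent Bernoulli$(d)$ flips. I would apply Azuma's inequality to a Doob martingale that exposes the assignment one agent (or one room) at a time; since only rooms containing a Believer contribute to the gain and there are at most $n$ such rooms, bounded differences of $\mathcal{O}(r)=\mathcal{O}(1)$ give $|N_{t+1}-f(n)|=\mathcal{O}(\sqrt{n\log N})$ except with probability $N^{-\omega(1)}$. A multiplicative Chernoff bound handles the independent flips. Together these imply that as soon as $N_t$ is polylogarithmic in $N$, the fluctuations are $o(N_t)$ and the drift propagates: $N_{t+1}\ge(1+c_1/2)N_t$ in the $m=1$ case and $N_{t+1}\le(1-c_2/2)N_t$ in the $m=2$ case, a.a.s.

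For $m=2$ the hypothesis $N_0\le\alpha N$ already places the process inside the negative-drift regime, so iterating the one-step bound for $T=\Theta(\log N)$ rounds and taking a union bound drives $N_T$ down to $s$ (the rumor dies out). For $m=1$ the main obstacle is the bootstrap phase when $N_0$ may be as small as $s$, a constant: one-round concentration is useless since the drift is itself of constant order. To bridge this gap I would argue directly that in each such round, with constant probability at least one Indifferent shares a room with a Seed, so $N_{t+1}\ge N_t+1$ with constant probability; coupling to a supercritical Galton--Watson branching process, with mean offspring roughly $(1-d)+(r-1)(1-b)>1$, then shows that $N_t$ exceeds $\log^2 N$ within $\mathcal{O}(\log\log N)$ rounds a.a.s., after which the concentration-based geometric-growth step takes over and pushes $N_t$ past $N/2$ in another $\mathcal{O}(\log N)$ rounds. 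The $\mathcal{O}(\log N)$ total round count is immediate from the constant-factor drift; the real technical work is in reconciling this branching-process bootstrap with the Azuma regime whose bounded differences are only $\mathcal{O}(r)$.
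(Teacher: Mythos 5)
Your high-level strategy (multiplicative drift plus per-round concentration plus a union bound, with a separate bootstrap argument for constant-size starting sets) is the same as the paper's, but three concrete steps do not go through as written.

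First, the $m=1$ bootstrap. A supercritical Galton--Watson coupling started from the constant-size seed set reaches $\log^2 N$ in $\mathcal{O}(\log\log N)$ generations only \emph{conditioned on sustained growth}; with probability bounded away from $0$ the excursion collapses back to the $s$ Seeds (e.g.\ every Seed lands in an all-Agnostic room and every non-Seed Believer flips), so your claim that $N_t$ exceeds $\log^2 N$ ``within $\mathcal{O}(\log\log N)$ rounds a.a.s.'' is false. You need a restart argument: each window of length $\mathcal{O}(\log\log N)$ succeeds with some probability $p$ (the paper gets $p\ge 1/\sqrt{\log N}$ for its window), and since the chain can never drop below $s$ you repeat $\omega(1/p)$ windows inside the $\mathcal{O}(\log N)$ budget to make the failure probability $o(1)$. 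Second, your concentration bound $|N_{t+1}-f(n)|=\mathcal{O}(\sqrt{n\log N})$ does not follow from the martingale you describe: exposing the seating one agent at a time gives bounded differences $\mathcal{O}(r)$ for \emph{all} $N$ agents (revealing where any Indifferent sits changes the conditional expectation by $\Theta(1)$), so $\sum_i c_i^2=\Theta(N)$ and Azuma only yields deviations $\mathcal{O}(\sqrt{N\log N})$, useless below $n=\Theta(\sqrt{N})$. This matters because your plan hands off from the branching process at $n=\log^2 N$ directly to the concentration regime; with the naive martingale there is an unbridged range $\log^2 N\ll n\ll\sqrt{N}$ (the paper fills exactly this range with a separate combinatorial phase counting bad permutations). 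Your sharper bound is salvageable, but only via a two-stage exposure (first the $n$ Believers' seats, then the $\mathcal{O}(rn)$ occupants of activated rooms), which you would need to spell out.

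Third, the $m=2$ endgame. Multiplicative decay with failure probability $\exp(-\Theta(n))$ per round cannot be union-bounded once $n$ is $\mathcal{O}(1)$ or even $\mathcal{O}(\log\log N)$, and ``dies out'' means reaching state exactly $s$, not merely becoming small. You need a separate argument for the tail: show that once $N_t$ drops below some threshold like $\log_{1/d}\log\log N$ it stays there for $\Theta(\log N)$ rounds a.a.s.\ (because two Believers almost never share a room, so no new Believers are created), and that in each such round all non-Seed Believers flip simultaneously with probability at least $d^{\,\log_{1/d}\log\log N}=1/\log\log N$, so within $\log N$ rounds the count hits $s$. Without this your proof establishes only that $N_T$ is small in expectation, not that the rumor dies out.
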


Theorem~\ref{theorem 1} states that by switching from $m=1$ to $m\ge 2$, the process behaves substantially differently. For $m=1$, the rumor takes over if even Others are all Indifferent initially. On the other hand, for $m=2$ the rumor dies out even though a constant fraction of the community believe in the rumor already. They both occur in $\mathcal{O}(\log N)$ rounds, which is shown to be asymptotically tight.

We first set up some basic definitions and state some standard concentration inequalities in Section~\ref{preliminaries}. Then, we provide the proof of Theorem~\ref{theorem 1} part (i) and part (ii) respectively in Sections~\ref{m=1} and~\ref{m=2}.

\subsection{Preliminaries}
\label{preliminaries}
Recall that random variable $N_t$ for $t\ge 0$ denotes the number of Believers at the end of round $t$. The conditional random variable $N_t|N_{t-1}=n$, called ``$N_t$ given $N_{t-1}=n$'', has probability mass function $Pr[N_t^n=n']=Pr[N_t=n'|N_{t-1}=n]$, where we shortly write $N_t^n$ for $N_t|N_{t-1}=n$. Furthermore, $\mathbb{E}[N_t^n]$ denotes the expected value of random variable $N_t^n$.

For $m=1$, we show that $\mathbb{E}[N_t^n]>\beta n$, for some constant $\beta>1$, if $s\le n\le N/2$. Therefore, one expects the rumor to take over in logarithmically many rounds. Furthermore for $m=2$, $\mathbb{E}[N_t^n]<\beta' n$, for some constant $\beta'<1$, if $s< n\le \alpha N$. Thus, by starting from $\alpha N$ Believers we expect the number of Believers decreases by a constant factor in each round, i.e., the rumor dies out in a logarithmic number of rounds. This should intuitively explain why Theorem~\ref{theorem 1} holds. However, to turn this expectation based argument into a formal proof, we need to show that random variable $N_t^n$ is sharply concentrated around its expectation and apply some careful calculations. In the rest of this section, we first compute the value of $\mathbb{E}[N_t^n]$ and then provide some basic concentration inequalities, which are later used to prove that $N_t^n$ is concentrated around $\mathbb{E}[N_t^n]$.

To calculate $\mathbb{E}[N_t^n]$, let us first determine the probability that an Indifferent agent $I$ becomes a Believer~\footnote{Whenever we talk about an Indifferent becoming a Believer, we clearly mean an Other who is an Indifferent; otherwise, an Agnostic never becomes a Believer by definition.} in the $t$-th round given $N_{t-1}=n$. We claim that this probability is equal to
\begin{equation}
\label{eq 8}
N\sum_{j=m}^{r-1}{n\choose j}{N-n-1\choose r-j-1}(r-1)!(N-r)!/N! .
\end{equation}
There are $N!$ possibilities to assign $N$ agents to $N$ seats. Let us call each of these assignments a \emph{permutation}. In Equation~(\ref{eq 8}),
the value $N!$ in the denominator stands for the number of all possible permutations and the numerator is equal to the number of permutations for which agent $I$ becomes a Believer. There are $N$ possibilities for fixing $I$'s seat, and the sum corresponds to the number of possibilities to assign the remaining $N-1$ agents such that there are at least $m$ Believers in $I$'s room. For that, one can choose exactly $j$ Believers, for some $m\le j\le r-1$, among $n$ Believers to share the room with agent $I$, and the remaining seats in the room will be taken by $r-j-1$ agents among the $N-n-1$ Indifferents. After the selection of the $r-1$ agents who are in the same room with $I$, there are $(r-1)!$ possibilities of assigning them to the seats and also $(N-r)!$ possibilities to assign the other agents to the $(N-r)$ remaining seats. 

Each Believer remains a Believer independently with probability $1-d$ if it is an Other and with probability 1 if it is one of the $s$ Seeds. Furthermore, each of the $(1-b)N-n$ Indifferents becomes a Believer with the probability given in Equation~(\ref{eq 8}) (while each of the $bN$ Agnostics remains an Indifferent). Therefore, we have
\begin{equation}
\label{eq 1}
\mathbb{E}[N_t^n]=s+(n-s)(1-d)+((1-b)N-n)\frac{(r-1)!\sum_{j=m}^{r-1}{n\choose j}{N-n-1\choose r-j-1}}{\prod_{j=1}^{r-1}(N-j)}.
\end{equation}

The above formula is a bit involved; thus, we sometimes will utilize relatively simpler formulas, which are easier to handle, as lower and upper bounds in our proofs.

Let us recall three basic concentration inequalities which we will apply repeatedly later.

\begin{theorem} (Markov's inequality~\cite{dubhashi2009concentration})
\label{Markov}
Let $X$ be a non-negative random variable and $a>0$, then
\[
Pr[X\ge a]\le \frac{\mathbb{E}[X]}{a}.
\]
\end{theorem}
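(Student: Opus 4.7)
The plan is to establish Markov's inequality directly from the definition of expectation by splitting on the event $\{X \ge a\}$ and using the non-negativity of $X$ as the only structural ingredient. Since this is a textbook result, I expect the proof to be a few lines long with no genuine obstacle.

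First I would write
\[
\mathbb{E}[X] \;=\; \mathbb{E}\!\left[X \cdot \mathds{1}_{\{X \ge a\}}\right] + \mathbb{E}\!\left[X \cdot \mathds{1}_{\{X < a\}}\right].
\]
Because $X \ge 0$ almost surely, both summands are non-negative, so in particular dropping the second one preserves the inequality and yields $\mathbb{E}[X] \ge \mathbb{E}[X \cdot \mathds{1}_{\{X \ge a\}}]$. Next I would invoke the pointwise bound $X \cdot \mathds{1}_{\{X \ge a\}} \ge a \cdot \mathds{1}_{\{X \ge a\}}$, which holds because on the event $\{X \ge a\}$ the factor $X$ is at least $a$ and off this event both sides vanish. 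Taking expectations and using linearity gives $\mathbb{E}[X \cdot \mathds{1}_{\{X \ge a\}}] \ge a \cdot Pr[X \ge a]$. Chaining the two inequalities and dividing by $a > 0$ produces the desired bound $Pr[X \ge a] \le \mathbb{E}[X]/a$.

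The main obstacle is essentially non-existent for this classical statement; the only subtlety worth flagging is that non-negativity of $X$ is used twice, once to discard the contribution from $\{X < a\}$ and once to justify the pointwise inequality $X \mathds{1}_{\{X \ge a\}} \ge a \mathds{1}_{\{X \ge a\}}$. If $\mathbb{E}[X] = \infty$ the conclusion is vacuous, so no extra integrability hypothesis is needed beyond what the statement already encodes. A more compact variant, which might be preferable stylistically in the paper, is to observe in a single step that $X \ge a \cdot \mathds{1}_{\{X \ge a\}}$ pointwise (again using $X \ge 0$ off the event) and then take expectations of both sides to arrive at the same conclusion.
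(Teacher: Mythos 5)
Your proof is correct; it is the standard argument via the pointwise bound $X \ge a\cdot\mathds{1}_{\{X\ge a\}}$ followed by taking expectations. Note, however, that the paper does not prove this statement at all: Markov's inequality is quoted as a known result with a citation to a textbook, so there is no in-paper proof to compare against; your derivation is the canonical one and fills that (intentional) gap correctly.
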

\begin{theorem}(Chernoff bound~\cite{dubhashi2009concentration})
\label{Chernoff}
Suppose $x_1,\cdots,x_k$ are independent Bernoulli random variables taking values in $\{0,1\}$ and let $X$ denote their sum, then for $0\leq \epsilon\leq 1$
\[
Pr[(1+\epsilon)\mathbb{E}[X]\leq X]\leq \exp(-\frac{\epsilon^2\mathbb{E}[X]}{3})
\]
and
\[
Pr[X\leq (1-\epsilon)\mathbb{E}[X]]\leq \exp(-\frac{\epsilon^2\mathbb{E}[X]}{2}).
\]
\end{theorem}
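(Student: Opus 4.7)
The plan is to prove both tails via the classical exponential moment method (Chernoff's trick), combining Markov's inequality (already stated as Theorem~\ref{Markov}) with independence of the $x_i$'s to control the moment generating function of $X = \sum_i x_i$. Throughout let $\mu := \mathbb{E}[X]$ and $p_i := Pr[x_i=1]$ so that $\mu = \sum_i p_i$.

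For the upper tail I would fix a free parameter $t>0$ and observe that, since $y\mapsto e^{ty}$ is strictly increasing, the event $\{X\geq (1+\epsilon)\mu\}$ equals $\{e^{tX}\geq e^{t(1+\epsilon)\mu}\}$. Applying Theorem~\ref{Markov} to the non-negative random variable $e^{tX}$ gives
\[
Pr[X\geq (1+\epsilon)\mu]\leq e^{-t(1+\epsilon)\mu}\,\mathbb{E}[e^{tX}].
\]
Independence factorizes the MGF as $\mathbb{E}[e^{tX}]=\prod_i\mathbb{E}[e^{tx_i}]=\prod_i\bigl(1+p_i(e^t-1)\bigr)$, and $1+y\leq e^y$ applied term by term yields $\mathbb{E}[e^{tX}]\leq\exp\bigl(\mu(e^t-1)\bigr)$. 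Optimizing in $t$ (the minimizer is $t=\ln(1+\epsilon)$) produces the raw Chernoff estimate
\[
Pr[X\geq (1+\epsilon)\mu]\leq\left(\frac{e^\epsilon}{(1+\epsilon)^{1+\epsilon}}\right)^{\!\mu}.
\]
To convert this into the form stated in the theorem I would then prove the scalar calibration inequality $\frac{e^\epsilon}{(1+\epsilon)^{1+\epsilon}}\leq e^{-\epsilon^2/3}$ for $0\leq\epsilon\leq 1$, equivalently $g(\epsilon):=(1+\epsilon)\ln(1+\epsilon)-\epsilon-\epsilon^2/3\geq 0$ on $[0,1]$. This is immediate from $g(0)=0$ and a one-line check that $g'(\epsilon)=\ln(1+\epsilon)-2\epsilon/3$ is non-negative on $[0,1]$ via a further derivative.

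The lower tail proceeds by the mirror argument: apply Theorem~\ref{Markov} to $e^{-tX}$ with $t>0$, use independence and $1-y\leq e^{-y}$ to get $\mathbb{E}[e^{-tX}]\leq\exp\bigl(-\mu(1-e^{-t})\bigr)$, optimize at $t=-\ln(1-\epsilon)$, and close with the sharper scalar inequality $\frac{e^{-\epsilon}}{(1-\epsilon)^{1-\epsilon}}\leq e^{-\epsilon^2/2}$ on $[0,1)$ (an analogous monotonicity check on $h(\epsilon):=(1-\epsilon)\ln(1-\epsilon)+\epsilon-\epsilon^2/2$), which explains why the denominator improves from $3$ to $2$ on this side. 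The main obstacle is not the probabilistic machinery---Markov plus independence delivers the exponential estimate almost mechanically---but the two calibration lemmas that convert the awkward ratios into clean Gaussian-type bounds: the constants $2$ and $3$ are tight and not interchangeable, and one must verify that the Taylor remainder is controlled in the correct direction on the \emph{entire} interval $[0,1]$. Outside that interval the upper-tail inequality with $\epsilon^2/3$ genuinely fails, which is precisely why the hypothesis $0\leq\epsilon\leq 1$ appears in the statement.
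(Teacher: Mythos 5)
The paper does not prove this statement; it is quoted as a standard textbook result with a citation to Dubhashi and Panconesi, so there is no internal proof to compare against. Your proposal is the standard exponential-moment-method derivation found in that reference (Markov's inequality applied to $e^{tX}$, factorization of the moment generating function by independence, optimization at $t=\ln(1+\epsilon)$ resp.\ $t=-\ln(1-\epsilon)$, followed by the two calibration inequalities $(1+\epsilon)\ln(1+\epsilon)-\epsilon\ge \epsilon^2/3$ and $(1-\epsilon)\ln(1-\epsilon)+\epsilon\ge\epsilon^2/2$ on $[0,1]$), and it is correct as written; the only point worth spelling out is that in verifying $g'(\epsilon)=\ln(1+\epsilon)-2\epsilon/3\ge 0$ the second derivative changes sign at $\epsilon=1/2$, so one must also check the endpoint value $g'(1)=\ln 2-2/3>0$ rather than rely on monotonicity alone.
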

\begin{theorem}(Azuma's inequality~\cite{dubhashi2009concentration})
\label{Azuma}
Let $x_1,\cdots, x_k$ be an arbitrary set of random variables and let $f$ be a function satisfying the property that for each $i\in[k]:=\{1,\cdots,k\}$, there is a non-negative $c_i$ such that
\[
|\mathbb{E}[f|x_i,\cdots, x_1]-\mathbb{E}[f|x_{i-1},\cdots, x_1]|\le c_i.
\]
Then, for constant $\epsilon>0$
\[
Pr[(1+\epsilon)\mathbb{E}[f]< f]\le \exp(-\frac{\epsilon^2\mathbb{E}[f]^2}{2c})=\exp(-\Theta(\frac{\mathbb{E}[f]^2}{c}))
\]
and 
\[
Pr[f<(1-\epsilon)\mathbb{E}[f]]\le \exp(-\frac{\epsilon^2\mathbb{E}[f]^2}{2c})=\exp(-\Theta(\frac{\mathbb{E}[f]^2}{c}))
\]
where $c:=\sum_{i=1}^{k}c_i^2$.
\end{theorem}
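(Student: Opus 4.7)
The plan is to prove Azuma's inequality via the classical Doob-martingale construction combined with the Chernoff-style exponentiation trick. First I would introduce the Doob martingale $Y_i := \mathbb{E}[f \mid x_1,\ldots,x_i]$ for $0 \le i \le k$, so that $Y_0 = \mathbb{E}[f]$ and $Y_k = f$. By the tower property $(Y_i)$ is automatically a martingale with respect to the filtration generated by $x_1,\ldots,x_i$, and the bounded-differences hypothesis translates directly into the statement that the increments $D_i := Y_i - Y_{i-1}$ satisfy $|D_i| \le c_i$ almost surely and are conditionally mean-zero, i.e.\ $\mathbb{E}[D_i \mid x_1,\ldots,x_{i-1}] = 0$.

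Second, set $t := \epsilon \mathbb{E}[f]$ and apply Markov's inequality (Theorem~\ref{Markov}) in exponentiated form: for any $\lambda > 0$,
\[
Pr[f - \mathbb{E}[f] > t] = Pr\Bigl[e^{\lambda \sum_{i=1}^k D_i} > e^{\lambda t}\Bigr] \le e^{-\lambda t}\,\mathbb{E}\Bigl[\exp\bigl(\textstyle\sum_{i=1}^k \lambda D_i\bigr)\Bigr].
\]
The crucial step is then a Hoeffding-type bound on each conditional moment generating function: since $D_i$ is conditionally mean-zero and almost surely lies in $[-c_i, c_i]$, convexity of $x \mapsto e^{\lambda x}$ on this interval yields $\mathbb{E}[e^{\lambda D_i} \mid x_1,\ldots,x_{i-1}] \le e^{\lambda^2 c_i^2 / 2}$. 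Peeling off the conditional expectations from $D_k$ inward via the tower property multiplies these bounds together and gives $\mathbb{E}[\exp(\sum_i \lambda D_i)] \le \exp(\lambda^2 c / 2)$ with $c = \sum_i c_i^2$.

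Combining the two estimates produces $Pr[f - \mathbb{E}[f] > t] \le \exp(\lambda^2 c / 2 - \lambda t)$, and optimizing over $\lambda > 0$ by choosing $\lambda = t/c$ yields $\exp(-t^2/(2c)) = \exp(-\epsilon^2 \mathbb{E}[f]^2/(2c))$, which is the upper-tail bound claimed in the theorem. The matching lower-tail bound follows either by applying the same argument to the martingale $-Y_i$ (whose increments $-D_i$ still satisfy $|-D_i| \le c_i$) or equivalently by repeating the exponentiation step with $\lambda < 0$.

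The main obstacle is the Hoeffding-type lemma bounding $\mathbb{E}[e^{\lambda D_i} \mid x_1,\ldots,x_{i-1}]$ by $e^{\lambda^2 c_i^2 / 2}$; this is the only genuinely nontrivial ingredient, requiring one to dominate $e^{\lambda x}$ on $[-c_i, c_i]$ by the chord through its endpoints, use the conditional mean-zero property to eliminate the linear part, and then bound $\log \cosh(\lambda c_i)$ by $\lambda^2 c_i^2 / 2$ via a Taylor expansion (or an elementary convexity estimate). Everything else in the argument is mechanical: the tower property supplies the martingale, Markov's inequality gives the exponential bound, and the final step is a one-variable calculus optimization in $\lambda$.
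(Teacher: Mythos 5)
Your proposal is correct: it is the standard Doob-martingale plus Hoeffding-lemma proof of the Azuma--Hoeffding inequality, and all the steps (tower property, conditional MGF bound $\mathbb{E}[e^{\lambda D_i}\mid x_1,\ldots,x_{i-1}]\le e^{\lambda^2 c_i^2/2}$, optimization at $\lambda=t/c$) are carried out correctly. The paper itself gives no proof of this statement --- it is quoted as a black-box tool from the cited reference --- so there is nothing to compare against beyond noting that your argument is exactly the canonical one found there.
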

Note that although Markov's inequality does not need any sort of independence, the Chernoff bound requires $X$ to be the sum of independent random variables. Azuma'a inequality also requires some level of independence in the sense that to get a reasonable tail bound, the value of $c$ has to be small. 

We define the random variable $N_t^n$ as a function of $N$ Bernoulli random variables and apply Azuma's inequality to attain Corollary~\ref{corollary 1}.

\begin{corollary}
\label{corollary 1}
In the $m$-rumor spreading model, for any $\epsilon>0$
\[
Pr[(1+\epsilon)\mathbb{E}[N_{t}^n]<N_{t}^n]\leq \exp(-\Theta(\frac{\mathbb{E}[N_t^n]^2}{N}))
\]
and 
\[
Pr[N_{t}^n< (1-\epsilon)\mathbb{E}[N_{t}^n]]\le \exp(-\Theta(\frac{\mathbb{E}[N_t^n]^2}{N})).
\]
\end{corollary}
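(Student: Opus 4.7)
The plan is to apply Azuma's inequality (Theorem~\ref{Azuma}) by writing $N_t^n$ as a function of $O(N)$ random variables, each contributing only a constant bounded-differences constant.

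First I would encode the randomness of one round given $N_{t-1}=n$ via two independent sources: the uniformly random placement of the $N$ agents into the $N$ seats, exposed sequentially as variables $X_1,\dots,X_N$ where $X_i$ denotes the agent occupying seat $i$ conditional on $X_1,\dots,X_{i-1}$, and the independent $\mathrm{Ber}(d)$ coin flips $Y_1,\dots,Y_{n-s}$ deciding which non-seed Believers become Indifferents. The concatenation of these at most $2N$ variables deterministically fixes $N_t^n$, which fits the Doob-martingale setting of Theorem~\ref{Azuma}.

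Next I would verify the bounded-differences requirement. For each $Y_j$ the constant is $c_j\le 1$, since its flip toggles the state of a single agent. For each seat variable $X_i$ I would couple the two conditional distributions corresponding to different values of $X_i$ by a single swap of two agents between seat $i$ and some later seat; such a swap alters the occupants of at most two rooms, and hence the post-round state of at most $2r$ agents, giving $c_i\le 2r=O(1)$. Summing squared constants yields $c=\sum c_i^2=O(N)$, and feeding this into Theorem~\ref{Azuma} with constant $\epsilon>0$ delivers
\[
Pr[(1+\epsilon)\mathbb{E}[N_t^n]<N_t^n]\le \exp\!\bigl(-\Theta(\mathbb{E}[N_t^n]^2/N)\bigr),
\]
together with the symmetric lower-tail bound.

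The main obstacle is the Lipschitz step for the seat-exposure variables, because the $X_i$ jointly encode a uniformly random permutation and are therefore heavily dependent, so the independent-variables form of the bounded-differences inequality cannot be invoked directly. I would handle this by working with the Doob-martingale formulation of Theorem~\ref{Azuma} and establishing the coupling carefully: after revealing $X_1,\dots,X_{i-1}$ and conditioning on two different values of $X_i$, the remaining unexposed seats can be paired so that every agent outside the two affected rooms occupies an identical seat in both coupled processes, leaving only a bounded contribution from the at most $2r$ swapped agents to the conditional-expectation difference.
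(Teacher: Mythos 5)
Your proposal is correct and follows essentially the same route as the paper: both write $N_t^n$ as a function of $\Theta(N)$ sequentially exposed random variables with $O(1)$ bounded differences (constant depending on the room size) and invoke Azuma's inequality with $c=\Theta(N)$. Your treatment of the seat-exposure variables via a swap coupling is in fact more explicit than the paper's, which simply asserts the Lipschitz bound $L'$ for the seat-indexed indicators without addressing the permutation dependency.
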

\begin{proof}
We prove the first inequality, and the proof of the second one is analogous. Consider an arbitrary labeling from $1$ to $N$ on the seats. We define Bernoulli random variable $x_i^n(t)$ for $1\le i\le N$ to be 1 if and only if the agent assigned to the $i$-th seat is a Believer given $N_{t-1}=n$. Then, for any $1\le i\leq N$
\[
|\mathbb{E}[N_t^n|x_i^n(t),\cdots, x_1^n(t)]-\mathbb{E}[N_t^n|x_{i-1}^n(t),\cdots, x_1^n(t)]|\le L'
\]
where $L'$ is a constant as a function of $L$. Now by applying Theorem~\ref{Azuma}, we have
\[
Pr[(1+\epsilon)\mathbb{E}[N_{t}^n]<N_{t}^n]\leq \exp(-\Theta(\frac{\mathbb{E}[N_t^n]^2}{\sum_{i=1}^{N}L'^2}))=\exp(-\Theta(\frac{\mathbb{E}[N_t^n]^2}{N})).
\]\qed
\end{proof}

\subsection{Proof of Part (i): $\mathbf{m=1}$}
\label{m=1}
In this section, we prove that for $m=1$ if $N_0\ge s$, then the rumor takes over a.a.s. in $\mathcal{O}(\log N)$ rounds. We divide our analysis into three phases. We show that by starting with $s$ Believers, the process reaches a state with at least $\Omega(\log\log\log N)$ Believers (phase 1), then it reaches a state with at least $N^{2/3}$ Believers (phase 2), and finally at least half of the community will become Believers (phase 3). Furthermore, each of these three phases takes $\mathcal{O}(\log N)$ rounds. As discussed, the main idea of the proof is to show that the number of Believers increases by a constant factor in expectation in each round and then apply the fact that $N_t^n$ is sharply concentrated around its expectation. However, there are two subtle issues which make us to split our analysis into three phases. Firstly, the exact formula for $\mathbb{E}[N_t^n]$ is quite involved, see Equation~(\ref{eq 1}). Therefore, we sometimes have to work with relatively simpler lower/upper bounds, which are technically easier to handle, but hold only for a particular range of $n$. Secondly, the error probability provided by Corollary~\ref{corollary 1} is of form $\exp(-\Theta(\frac{\mathbb{E}[N_t^n]^2}{N}))$, which is equal to a constant for $\mathbb{E}[N_t^n]=\mathcal{O}(\sqrt{N})$. Therefore, for some settings we need to bound the error probability by applying other techniques. Furthermore, we should mention that the values $\log\log\log N$ and $N^{2/3}$ are selected in a way to make the calculations straightforward, otherwise the proof works for some other values as well.

\paragraph{Phase 1.} We prove that by starting from $N_0\ge s$, there are at least $\log_c\log\log N$ Believers after $\mathcal{O}(\log N)$ rounds a.a.s., where $c>1$ is a constant to be determined later.\footnote{We assume that all logarithms are to base $e$, otherwise we point out explicitly.}

Let us first provide Claim~\ref{claim 1}, whose proof is given below.

\begin{claim}
\label{claim 1}
Assume that $T=\log_r\log_c\log\log N$ and $1\le n_0\le \log_c\log\log N$, then for $t'\ge 0$,
\begin{equation*}
Pr[N_{t'+T}\ge r^{T}n_0|N_{t'}=n_0]\geq \frac{1}{\sqrt{\log N}}.
\end{equation*}
\end{claim}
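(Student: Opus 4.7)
The plan is to force a clean deterministic ``$r$-folding'' event in each of the $T$ rounds and then lower-bound the joint probability of these events as a product of per-round probabilities. Concretely, I would define $G_i$ to be the event that, in round $t'+i$, (a) the currently Believing agents are placed in pairwise distinct rooms, (b) the $r-1$ remaining seats in each such room are filled entirely by Indifferent Others (no Agnostic, no other Believer), and (c) none of the Other-Believers flips to Indifferent this round. On $G_i$ each Believer-occupied room deterministically produces $r$ Believers (the Believer stays, and since $m=1$ all $r-1$ room-mates become Believers), so $N_{t'+i} = r\cdot N_{t'+i-1}$. Thus on $\bigcap_{i=1}^T G_i$ we have $N_{t'+T} = r^T n_0$, and it suffices to lower-bound $Pr[\bigcap_i G_i \mid N_{t'}=n_0]$ by $1/\sqrt{\log N}$.

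The core per-round task is to show $Pr[G_i \mid N_{t'+i-1} = n] \ge c_2^{\,n}$ for some constant $c_2 \in (0,1)$ and every relevant $n$. Since the range of interest is $n \le r^{T-1} n_0 = o(\sqrt{N})$, a union bound over pairs of Believers shows the ``distinct rooms'' clause fails with probability at most $O(n^2/N) = o(1)$. Conditional on a given Believer's seat, the other $r-1$ seats in its room are a uniform $(r-1)$-subset of the remaining $N-1$ agents, of which $(1-b)N - n$ are Indifferent Others; the hypergeometric estimate $\binom{(1-b)N-n}{r-1}/\binom{N-1}{r-1} = (1-b)^{r-1}(1-o(1))$ gives the per-Believer probability, and applying it sequentially to the $n$ Believers yields joint lower bound $((1-b)^{r-1})^n (1-o(1))$ --- the only nontrivial step, since the per-Believer events are not literally independent. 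Multiplying by the independent no-flip probability $(1-d)^{n-s}$, absorbing the constant $(1-d)^{-s}$ and the $1-o(1)$ correction, and setting $c_2 := (1-b)^{r-1}(1-d)$, I would conclude $Pr[G_i \mid N_{t'+i-1} = n] \ge \tfrac{1}{2} c_2^{\,n}$ for $N$ large.

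Finally, because the seating and flipping in round $t'+i$ are drawn afresh independently of the earlier rounds' randomness given the previous configuration, and because $G_i$ depends on the previous state only through its size (by symmetry over agent identities of the relevant type),
\[
Pr\Bigl[\bigcap_{i=1}^{T} G_i \Bigm| N_{t'} = n_0\Bigr] \;\ge\; \prod_{i=1}^{T} \tfrac{1}{2}\, c_2^{\,r^{i-1} n_0} \;=\; 2^{-T}\, c_2^{\,n_0 (r^T - 1)/(r-1)}.
\]
Since $r^T = \log_c \log\log N$ and $n_0 \le \log_c \log\log N$, the exponent on $c_2$ is at most $(\log_c \log\log N)^2/(r-1) = O((\log \log\log N)^2)$, which is $o(\log\log N)$, while $2^{-T}$ is only polynomially small in $\log\log N$. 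For any fixed $c>1$ and $N$ large enough, the right-hand side therefore dominates $1/\sqrt{\log N}$, proving the claim. The main obstacle I expect is the per-round bound of step 2: the ``each Believer lands among $r-1$ Indifferent Others'' events are not literally independent, so I have to check carefully that the sequential hypergeometric conditioning only costs a benign $1-o(1)$ factor, which can then be absorbed into the constant $c_2$.
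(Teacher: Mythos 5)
Your proposal is correct and follows essentially the same route as the paper: both force the exact $r$-fold multiplication event in every round (your $G_i$ is precisely the paper's $\mathcal{E}_1\cap\mathcal{E}_2$, i.e., every Believer seated with $r-1$ Indifferent Others and no flips), lower-bound its probability by $\mathrm{const}^{-n}$, and multiply over the $T$ rounds via the Markov property. The only differences are cosmetic: the paper bounds the seating probability by an exact permutation count followed by Stirling estimates rather than your sequential hypergeometric conditioning (which is valid here since $n=O((\log\log\log N)^2)$ keeps the accumulated $(1-o(1))^n$ correction benign), and your geometric-series evaluation of the exponent is in fact slightly sharper than the paper's cruder $c^{-r^{T}n_0T}$ bound --- both comfortably clear $1/\sqrt{\log N}$.
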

Claim~\ref{claim 1} implies that from a state with $n_0\ge 1$ Believers, after $T$ rounds the process reaches a state with at least $r^{T}n_0\ge r^{T}=\log_c\log\log N$ Believers with probability at least $1/\sqrt{\log N}$. The probability that the process runs for $T\log^{2/3} N$ rounds and never reaches $\log_c\log\log N$ Believers or more is upper-bounded by
\[
(1-\frac{1}{\sqrt{\log N}})^{\log^{2/3} N}\le \exp(-\log^{1/6}N)=o(1)
\]

where we used $1-x\le \exp(-x)$. Thus, for some $t_1\le T\log^{2/3}N\le \log N$ the process reaches at least $\log_c\log\log N$ Believers with probability $1-o(1)$.

\paragraph{Proof of Claim~\ref{claim 1}.} 
Let us first compute $Pr[N_t^n=rn]$, which is the probability that there are $rn$ Believers in the $t$-th round given $N_{t-1}=n$, for some $n\le b N$. We define $\mathcal{E}_1$ to be the event that in round $t$ each Believer is assigned to a room with $r-1$ Others who are all Indifferents given $N_{t-1}=n$. Furthermore, $\mathcal{E}_2$ is the event that no Believer becomes Indifferent in round $t$ given $N_{t-1}=n$. The number of Believers will increase to $rn$ in round $t$ if and only if both events $\mathcal{E}_1$ and $\mathcal{E}_2$ occur. That is,
\begin{equation}
\label{eq1}
Pr[N_t^n=rn]=Pr[\mathcal{E}_1]\cdot Pr[\mathcal{E}_2]\ge Pr[\mathcal{E}_1]\cdot (1-d)^n
\end{equation}
where we used that events $\mathcal{E}_1$ and $\mathcal{E}_2$ are independent and $Pr[\mathcal{E}_2]= (1-d)^{n-s}\ge (1-d)^n$. We claim that
\[
Pr[\mathcal{E}_1]={N/r\choose n}n!\ r^n{(1-b)N-n \choose (r-1)n}((r-1)n)!(N-rn)!/N!
\]
where the value $N!$ in the denominator stands for the number of all possible permutations and the numerator is equal to the number of permutations for which the event $\mathcal{E}_1$ occurs. The value ${N/r\choose n}$ is the number of possibilities of selecting the rooms which contain the Believers. Moreover, there are $n!\ r^n$ ways of placing $n$ Believers in these $n$ rooms such that each of the rooms includes one Believer. We need to choose $(r-1)n$ Indifferents among all $(1-b)N-n$ Others who are Indifferents to fill in the remaining $(r-1)n$ seats in these rooms. There are $((r-1)n)!$ possible assignments of the $(r-1)n$ chosen agents into the $(r-1)n$ seats. Finally, there are $(N-rn)!$ possibilities to place the remaining $N-rn$ agents.

Now, we approximate this probability. Since ${n \choose k}\ge (\frac{n}{k})^k$ by Stirling's approximation~\cite{marsaglia1990new} and $k!\ge (k/e)^k$, we get
\[
Pr[\mathcal{E}_1] \ge \frac{(\frac{N}{rn})^n\ (\frac{n}{e})^n\ r^n\ (\frac{(1-b)N-n}{(r-1)n)})^{(r-1)n}\ (\frac{(r-1)n}{e})^{(r-1)n}}{\prod_{j=0}^{rn-1}(N-j)}.
\]
We have $(1-b)N-n\geq (1-2b)N$ for $n\le b N$. Thus, by applying $\prod_{j=0}^{rn-1}(N-j)\le N^{rn}$ and $b\le 0.25$, we get
\[
Pr[\mathcal{E}_1] \ge \frac{(1-2b)^{(r-1)n}}{\exp(rn)}\ge \exp(-2rn).
\]
Combining this inequality and Equation~(\ref{eq1}) yields
\begin{equation}
\label{eq2}
Pr[N_t^n=rn]\ge \exp(-2rn)\cdot (1-d)^n\ge c^{-n}
\end{equation}
for some constant $c>1$ (which is the constant that we promised to determine later).

The probability $Pr[N_{t'+T}\ge r^{T}n_0|N_{t'}=n_0]$ is equal to
\[
\prod_{t=1}^{T}Pr[N_{t'+t}=r^tn_0|N_{t'+t-1}=r^{t-1}n_0]\ge \prod_{t=1}^{T}c^{-r^{t-1}n_0}\ge c^{-r^{T}n_0T}
\]
where we applied Equation~(\ref{eq2}). (To apply Equation~(\ref{eq2}) we need the inequality $r^{t-1}n_0\le b N$ to hold, which is the case for $t\le T= \log_r\log_c\log\log N$ and $n_0\le \log_c\log\log N$.) Now, by plugging in $n_0\leq \log_c\log\log N$ and $T=\log_r\log_c\log\log N$ implies that
\[
Pr[N_{t'+T}= r^{T}n_0|N_{t'}=n_0]\ge c^{-(\log_c\log\log N)^3}\ge \frac{1}{\sqrt{\log N}}.
\]
This finishes the proof of Claim~\ref{claim 1}.\qed
\paragraph{Phase 2.} So far we proved that a.a.s. after $t_1$ rounds, for some $t_1\leq \log N$, there are at least $n_1=\log_c\log\log N$ Believers. Now, we show that from a state with at least $n_1$ Believers, a.a.s. the process reaches a state with at least $N^{2/3}$ Believers after $\mathcal{O}(\log N)$ rounds. In this phase we always assume that $n_1\le n \le N^{\frac{2}{3}}$.

Let us define $\mathcal{E}_3$ to be the event that at least $\frac{3}{20}n$ Indifferents become Believers in round $t$ given $N_{t-1}=n$. Furthermore, $\mathcal{E}_4$ is the event that at most $\frac{5}{40}n$ Believers become Indifferents in round $t$ given $N_{t-1}=n$. We first provide lower bounds on the probabilities $Pr[\mathcal{E}_3]$ and $Pr[\mathcal{E}_4]$, respectively in Claim~\ref{claim 2} and Claim~\ref{claim 3}.

\begin{claim}
\label{claim 2} $Pr[\mathcal{E}_3]\ge 1-\exp(-\Theta(N))$.
\end{claim}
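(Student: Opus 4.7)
The plan is to show that $\mathbb{E}[X]$ comfortably exceeds $3n/20$ and then apply a concentration inequality, where $X$ denotes the number of Indifferent Others who become Believers in round $t$ given $N_{t-1}=n$.

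First, I would reduce Equation~(\ref{eq 8}) to a manageable form for $m=1$. The key observation is that an Indifferent becomes a Believer iff at least one of the other $r-1$ agents sharing its room is a Believer; by the hypergeometric sampling interpretation of the random seat assignment, this probability equals
\[
p_I \;=\; 1 \;-\; \frac{\binom{N-n-1}{r-1}}{\binom{N-1}{r-1}} \;=\; 1 \;-\; \prod_{j=0}^{r-2}\frac{N-n-1-j}{N-1-j}.
\]
For $n\le N^{2/3}$ each factor is $1-\Theta(n/N)$, and a telescoping lower bound gives $p_I \ge (1-o(1))(r-1)n/N$. Multiplying by the $(1-b)N-n$ Indifferent Others yields $\mathbb{E}[X]\ge (1-o(1))(r-1)(1-b)n$, which under $r\ge 2$ and $b\le 1/4$ is at least $3n/4$ once $N$ is large enough---a comfortable multiplicative margin of $5$ over the target $3n/20$.

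Next, I would apply a concentration argument in the spirit of Corollary~\ref{corollary 1}. Viewing $X$ as a function of the $N$ Bernoulli indicators $x_i^n(t)$ describing which seats are occupied by Believers, a single indicator change alters $X$ by at most a constant $L'$ depending only on the room size $L$, so Azuma's inequality (Theorem~\ref{Azuma}) yields
\[
Pr[X < (1-\epsilon)\mathbb{E}[X]] \;\le\; \exp\!\bigl(-\Theta(\mathbb{E}[X]^2/N)\bigr).
\]
Choosing $\epsilon$ so that $(1-\epsilon)\mathbb{E}[X]\ge 3n/20$---any $\epsilon$ slightly below $4/5$ works---turns this into the desired lower bound on $Pr[\mathcal{E}_3]$.

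The main obstacle is the tightness of the claimed bound $1-\exp(-\Theta(N))$. The Azuma estimate above is really of order $\exp(-\Theta(n^2/N))$, which at the lower end of the phase~2 range $n\approx n_1 = \log_c\log\log N$ is essentially trivial. To achieve the stated form uniformly in $n$, I would instead argue room by room: for each of the $n$ Believers, consider the indicator that its $r-1$ roommates are all Indifferent Others, which has probability bounded below by a constant and produces $r-1$ new Believers. These indicators are negatively correlated because the pool of agents is fixed, so a Chernoff-type bound that allows negative association yields an $\exp(-\Theta(n))$ tail on the number of productive Believers, and hence on $X$. Making the negative-association step fully rigorous---especially in the small-$n$ regime, where one must also use the large pool of Indifferents---is the principal technical hurdle.
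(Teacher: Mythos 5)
You are right on two preliminary points: the expectation bound $\mathbb{E}[X]\ge(1-o(1))(r-1)(1-b)n$ is correct, and your observation that an Azuma/bounded-differences bound over the $N$ seats only yields $\exp(-\Theta(n^2/N))$ --- vacuous at the bottom of the phase-2 range $n\approx\log_c\log\log N$ --- is exactly the difficulty (indeed, the $\exp(-\Theta(N))$ in the claim's statement is a typo: the paper's own proof establishes, and Equation~(\ref{eq 5}) only uses, $1-\exp(-\Theta(n))$). The genuine gap is in your proposed remedy. The indicators $I_i=\{$all $r-1$ roommates of Believer $i$ are Indifferent Others$\}$ are \emph{not} negatively associated in general; the ``fixed pool'' intuition applies to indicators of the form ``agent $j$ has type $T$,'' not to events about the composition of an entire room. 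Concretely, for $r=2$ one computes $Pr[I_j=1\mid I_i=1]=\frac{A-1}{N-3}>\frac{A}{N-1}=Pr[I_j=1]$ whenever $A:=(1-b)N-n>(N-1)/2$, which holds for $b\le 0.25$ and $n=o(N)$: conditioning on $I_i=1$ removes a pair whose Indifferent-fraction ($1/2$) is below the population's ($\approx 3/4$), enriching the remaining pool and making the correlation \emph{positive}. So the negative-association Chernoff bound you lean on is unavailable, and since you yourself flag this step as unresolved, the proof does not close. (A secondary quantitative issue: lower-bounding $X$ by $(r-1)\sum_i I_i$ with the all-Indifferent-roommates event gives $\mathbb{E}[I_i]\approx(1-b)^{r-1}$, which for large constant $r$ drops below the needed $\frac{3}{20(r-1)}$.)

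The paper avoids correlation issues entirely by a first-moment count over permutations. It calls a Believer productive if its room contains no other Believer but at least one Indifferent Other, so each productive Believer converts at least one Indifferent, and it suffices that at least $\frac{3}{20}n$ Believers are productive. It then bounds the number of permutations in the two bad families --- $R_1$, where at least $\frac{1}{20}n$ Believers share a room with another Believer, and $R_2$, where at least $\frac{8}{10}n$ Believers have all $r-1$ roommates Agnostic --- by explicit overcounting, obtaining $|R_1|/N!\le\exp(-\Theta(n))$ (using $n/N=o(1)$) and $|R_2|/N!\le\exp(-\Theta(n))$ (using that $\frac{10e}{8}b^{r-1}(\frac{20}{19})^r<1$ for $b\le 0.25$). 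A union bound then gives $Pr[\mathcal{E}_3]\ge 1-\exp(-\Theta(n))$ uniformly over $n_1\le n\le N^{2/3}$. If you want to salvage your route, you would need either such a direct counting argument or a moment-generating-function computation for $\sum_i I_i$ under the random seating, rather than an appeal to negative association.
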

\begin{proof}
Assume that there are $n$ Believers. Let $R_1$ be the set of permutations where at least $\frac{1}{20}n$ Believers share their room with at least another Believer. Furthermore, define $R_2$ to be the set of permutations where there are at least $\frac{8}{10}n$ Believers who share their room with $(r-1)$ Agnostics. We claim that
\begin{equation}
\label{eq 21}
Pr[\mathcal{E}_3]\ge 1-\frac{|R_1|+|R_2|}{N!}.
\end{equation}
This is true because if a permutation is not in $R_1\cup R_2$, then there are at least $n-(\frac{1}{20}n+\frac{8}{10}n)=\frac{3}{20}n$ Believers who share their room with no Believer but at least one Other who is Indifferent. Therefore, each of these $\frac{3}{20}n$ Believers single-handedly turns an Indifferent into a Believer. That is, at least $\frac{3}{20}n$ Indifferents become Believers.

There are ${n\choose n/20}$ possibilities to choose $\frac{1}{20}n$ Believers. We have ${N/r\choose n/40}$ possibilities for the selection of $\frac{1}{40}n$ rooms. Furthermore, $(\frac{rn}{40})^{\frac{n}{20}}$ is an upper bound on the number of placements of these $\frac{1}{20}n$ Believers in the $\frac{rn}{40}$ seats in the selected rooms. Finally, there are $(N-\frac{n}{20})!$ possibilities to place the remaining agents. Clearly, in this way we count each permutation which is in $R_1$ at least once. Therefore, we have
\[
\frac{|R_1|}{N!}\le \frac{{n\choose n/20}{N/r \choose n/40}(\frac{rn}{40})^{\frac{n}{20}}(N-\frac{n}{20})!}{N!}.
\]
By applying ${n \choose k}\le (\frac{ne}{k})^k$, Stirling's approximation~\cite{marsaglia1990new}, we get
\begin{align*}
& \frac{|R_1|}{N!}\le\frac{(20e)^{\frac{n}{20}}(\frac{40eN}{rn})^{\frac{n}{40}}(\frac{rn}{40})^{\frac{n}{20}}}{\prod_{j=0}^{\frac{n}{20}-1}(N-j)}\le \frac{(20e)^{\frac{n}{20}}(\frac{40eN}{rn})^{\frac{n}{40}}(\frac{rn}{40})^{\frac{n}{20}}}{(\frac{19}{20}N)^{\frac{n}{20}}}.
\end{align*}
In the last step we used the fact that $N-j\ge \frac{19}{20}N$ for $0\le j\le \frac{n}{20}-1$. By some simplifications we get
\begin{equation}
\label{eq 22}
\frac{|R_1|}{N!}\le (\frac{(20e)^2\times e\times r\times 20^2}{40\times 19^2})^{\frac{n}{40}}(\frac{n}{N})^{\frac{n}{40}}\le \exp(-\Theta(n)) 
\end{equation}
where we used that $\frac{n}{N}=o(1)$ for $n\le N^{\frac{2}{3}}$.

To bound $|R_2|$, we select $\frac{8n}{10}$ Believers, $\frac{8(r-1)n}{10}$ Agnostics, and $\frac{8n}{10}$ rooms. This can be done in ${n\choose 8n/10}{bN\choose 8(r-1)n/10}{N/r\choose 8n/10}$ different ways. There are $(\frac{8n}{10})!\ r^{\frac{8n}{10}}$ possibilities to place the Believers in these rooms such that each of the rooms has exactly one Believer. Finally, there are $(\frac{8(r-1)n}{10})!$ possibilities to place the Agnostics in the remaining $\frac{8(r-1)n}{10}$ seats in these rooms and $(N-\frac{8rn}{10})!$ possibilities to place the remaining agents. In this way, we count a permutation in $R_2$ at least once. Therefore, we have
\[
\frac{|R_2|}{N!}\le \frac{{n\choose 8n/10}{bN\choose 8(r-1)n/10}{N/r\choose 8n/10}(\frac{8n}{10})!r^{\frac{8n}{10}}(\frac{8(r-1)n}{10})!(N-\frac{8rn}{10})!}{N!}.
\]
Note that $N-j\ge \frac{19}{20}N$ for $0\le j\le \frac{8rn}{10}-1$ and $n\le N^{\frac{2}{3}}$. This implies that
\[
\frac{(N-\frac{8rn}{10})!}{N!}=\frac{1}{\prod_{j=0}^{\frac{8rn}{10}-1}N-j}\le \frac{1}{(\frac{19}{20}N)^{\frac{8rn}{10}}}.
\]
By applying this inequality and ${n \choose k}\le (\frac{ne}{k})^k$ and $n!\le e\sqrt{n}(\frac{n}{e})^{n}$, Stirling's approximation~\cite{marsaglia1990new}, we get
\begin{align*}
& \frac{|R_2|}{N!}\le \frac{(\frac{10e}{8})^{\frac{8n}{10}}(\frac{(10e) (bN)}{8(r-1)n})^{\frac{8(r-1)n}{10}}(\frac{10eN}{8rn})^{\frac{8n}{10}}e\sqrt{\frac{8n}{10}}(\frac{8n}{10e})^{\frac{8n}{10}}r^{\frac{8n}{10}}e\sqrt{\frac{8(r-1)n}{10}}(\frac{8(r-1)n}{10e})^{\frac{8(r-1)n}{10}}}{(\frac{19}{20}N)^{\frac{8rn}{10}}}
\end{align*}
By some simplifications, we have
\begin{equation}
\label{eq 23}
\frac{|R_2|}{N!}\le (\frac{10e}{8}b^{r-1}(\frac{20}{19})^{r})^{\frac{8n}{10}} e^2\sqrt{\frac{8^2(r-1)n^2}{10^2}}=\exp(-\Theta(n)).
\end{equation}
In the last step, we used that $\frac{10e}{8}b^{r-1}(\frac{20}{19})^{r}$ is a constant strictly smaller than 1. This is true because $b$ is a small constant, say $b\le 0.25$ (see Section~\ref{assumptions}).

To finish the proof, combining Equations~(\ref{eq 21}), (\ref{eq 22}), and (\ref{eq 23}) yields
\[
Pr[\mathcal{E}_3]\ge 1-\exp(\Theta(n))
\]\qed
\end{proof}
\begin{claim}
\label{claim 3}
$Pr[\mathcal{E}_4]\ge 1-\exp(-\Theta(N))$.
\end{claim}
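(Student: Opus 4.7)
The plan is to apply the Chernoff bound directly to the independent flipping coins. Since each Other-Believer becomes Indifferent in round $t$ independently with probability $d$ and the $s$ Seeds never flip, conditioning on $N_{t-1}=n$ makes the number $X$ of Believers that turn Indifferent a sum of $n-s$ i.i.d.\ Bernoulli$(d)$ random variables, independent of the random seating. Hence $\mathbb{E}[X]=d(n-s)\le n/10$ by the standing assumption $d\le 0.1$, sitting a constant multiplicative factor below the threshold $5n/40=n/8$.

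The main calculation is then Chernoff with $\epsilon:=1/4$: the relation $n/8\ge(1+\epsilon)\mathbb{E}[X]$ holds, so Theorem~\ref{Chernoff} gives
\[
Pr[X\ge n/8]\;\le\;\exp\!\bigl(-\epsilon^2\mathbb{E}[X]/3\bigr)\;=\;\exp(-\Theta(N)),
\]
which immediately yields $Pr[\mathcal{E}_4]=Pr[X\le n/8]\ge 1-\exp(-\Theta(N))$, as required.

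The hard part, compared with Claim~\ref{claim 2}, is essentially absent here: no combinatorial enumeration of ``bad'' room configurations is needed, because the flipping coins are mutually independent by construction, and independence lets Chernoff do all the work in one shot. The only delicate point is ensuring the additive gap $5n/40-\mathbb{E}[X]\ge n/40-ds$ stays positive and of linear order in $n$, which is exactly what the assumption $d\le 0.1<1/8$ provides uniformly in $n$; were $d$ allowed to approach $1/8$ the Chernoff exponent would degenerate and the numerical constants $5/40$ and $3/20$ in the definitions of $\mathcal{E}_3$ and $\mathcal{E}_4$ would need to be retuned to preserve a constant margin, but this is incompatible with the regime declared in Section~\ref{assumptions}.
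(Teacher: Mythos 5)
Your proposal follows exactly the same route as the paper's proof: condition on $N_{t-1}=n$, note that the Believer-to-Indifferent flips are independent Bernoulli trials unaffected by the random seating, bound $\mathbb{E}[X]=d(n-s)\le dn\le n/10$, and apply the Chernoff bound with a constant $\epsilon$ chosen so that $(1+\epsilon)\mathbb{E}[X]\le \frac{5}{40}n$; your explicit choice $\epsilon=1/4$ is a valid instance of the paper's ``sufficiently small $\epsilon$''. The one point to correct is the final equality $\exp(-\epsilon^2\mathbb{E}[X]/3)=\exp(-\Theta(N))$: since $\mathbb{E}[X]=d(n-s)=\Theta(n)$ and Phase~2 covers the whole range $\log_c\log\log N\le n\le N^{2/3}$, the exponent you obtain is $\Theta(n)$, not $\Theta(N)$, and the two differ substantially at the lower end of that range. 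This does not break the overall argument --- the $\Theta(N)$ in the statement of Claim~\ref{claim 3} is itself inaccurate, as the paper's own proof likewise concludes only $1-\exp(-\Theta(n))$, and that weaker bound is exactly what is consumed downstream in Equation~(\ref{eq 5}) --- but you should state your conclusion as $Pr[\mathcal{E}_4]\ge 1-\exp(-\Theta(n))$ rather than asserting a $\Theta(N)$ exponent that is false whenever $n=o(N)$.
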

\begin{proof} Define random variable $X^n_t$ to be the number of Believers who become Indifferent in round $t$ given $N_{t-1}=n$. Consider an arbitrary labeling from $1$ to $n$ on the Believers and define Bernoulli random variable $x_i$ for $1\le i\le n$ to be 1 if and only if the $i$-th Believer becomes Indifferent. Recall that each Believer becomes an Indifferent with probability $d$ independently in round $t$, except Seeds who remain Believers. Thus, $Pr[x_i=1]\le d$. Clearly, $X_t^n=\sum_{i=1}^{n}x_i$ and $\mathbb{E}[X_t^n]=(n-s)d\le nd$. Note that $x_i$s are independent; thus, by applying Chernoff bound (see Theorem~\ref{Chernoff}) we get
\begin{align*}
& Pr[X_t^n\ge \frac{5}{40}n]\le Pr[X_t^n\ge (1+\epsilon)dn]\le Pr[X^t_n\ge (1+\epsilon)\mathbb{E}[X_t^n]]\le\\
& \exp(-\frac{\epsilon^2\mathbb{E}[X_t^n]}{3})= \exp(-\Theta(n))
\end{align*}
where we choose the constant $\epsilon>0$ sufficiently small to satisfy $(1+\epsilon)d\le \frac{5}{40}$. This is doable since we assume that $d\le 0.1$, see Section~\ref{assumptions}. Therefore,
\[
Pr[\mathcal{E}_4]= Pr[X_t^n< \frac{5}{40}n]=1-Pr[X_t^n\ge\frac{5}{40}n]=1-\exp(-\Theta(n)).
\]
\qed
\end{proof}

If both events $\mathcal{E}_3$ and $\mathcal{E}_4$ occur, then the number of Believers will increase from $n$ to at least $n+\frac{3}{20}n-\frac{5}{40}n=(1+\frac{1}{40})n$. Since $\mathcal{E}_3$ and $\mathcal{E}_4$ are independent, by applying Claims~\ref{claim 2} and~\ref{claim 3} we get
\begin{equation}
\label{eq 5}
Pr[N_t^n\ge \frac{41}{40}n]\ge Pr[\mathcal{E}_3 \cap \mathcal{E}_4]=Pr[\mathcal{E}_3]\cdot Pr[\mathcal{E}_4]= 1-\exp(-\Theta(n))\ge 1-c_1^{-n}
\end{equation}
for some constant $c_1>1$.

Equation~(\ref{eq 5}) states that in each round the number of Believers increases by a constant fraction with a probability converging to one. Building on this statement and applying the union bound, we show that a.a.s. in logarithmically many rounds the number of Believers overpasses $N^{\frac{2}{3}}$.

Recall that in phase 1 we proved that there is some $t_1\le \log N$ so that $N_{t_1}\ge n_1$ a.a.s. for $n_1=\log_c\log\log N$. 

\begin{claim}
\label{claim 10}
Assume that $(\frac{41}{40})^{t-1}n_1< N^{\frac{2}{3}}$ for some positive integer $t$. Then, given $N_{t_1}\ge n_1$, we have $N_{t_1+t}\ge (\frac{41}{40})^{t}n_1$ with probability at least $1-\sum_{t'=0}^{t-1}c_1^{-(\frac{41}{40})^{t'}n_1}$.
\end{claim}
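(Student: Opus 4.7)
The plan is to prove the claim by induction on $t$, using Equation~(\ref{eq 5}) together with the Markov property of the process and a union bound. Define the events $A_{t'} := \{N_{t_1+t'} \ge (\frac{41}{40})^{t'} n_1\}$ for $0 \le t' \le t$; the task is to lower-bound $Pr[A_t\,|\,N_{t_1}\ge n_1]$. The base case $Pr[A_0\,|\,N_{t_1}\ge n_1]=1$ is immediate from the conditioning.

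First I would observe that the hypothesis $(\frac{41}{40})^{t-1} n_1 < N^{2/3}$ guarantees $(\frac{41}{40})^{t'} n_1 < N^{2/3}$ for every $0 \le t' \le t-1$, so every intermediate Believer count lies in the window $[n_1, N^{2/3}]$ where Equation~(\ref{eq 5}) is valid. For the inductive step, I would condition on the exact value $N_{t_1+t'} = n$ for an arbitrary $n \ge (\frac{41}{40})^{t'} n_1$: the Markov property together with Equation~(\ref{eq 5}) then gives $Pr[N_{t_1+t'+1} \ge \frac{41}{40} n \,|\, N_{t_1+t'} = n] \ge 1 - c_1^{-n} \ge 1 - c_1^{-(\frac{41}{40})^{t'} n_1}$, where the last inequality uses that $c_1^{-n}$ is decreasing in $n$. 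Since this lower bound is uniform over all $n$ consistent with $A_{t'}$, averaging gives $Pr[A_{t'+1}\,|\,A_{t'}] \ge 1 - c_1^{-(\frac{41}{40})^{t'} n_1}$.

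Finally, chaining the $t$ one-step estimates via the identity $Pr[A_t\,|\,N_{t_1}\ge n_1] = \prod_{t'=0}^{t-1} Pr[A_{t'+1}\,|\,A_{t'}]$ and invoking the elementary inequality $\prod_i (1-p_i) \ge 1 - \sum_i p_i$ for $p_i \in [0,1]$ delivers the claimed bound $1-\sum_{t'=0}^{t-1}c_1^{-(\frac{41}{40})^{t'}n_1}$. The argument is essentially mechanical once Equation~(\ref{eq 5}) is in hand; the only place where modest care is required is the bookkeeping around conditioning — namely, verifying that on the event $A_{t'}$ the per-step failure probability is bounded by $c_1^{-(\frac{41}{40})^{t'}n_1}$ regardless of the specific value of $N_{t_1+t'}$, which follows from the monotonicity of the bound in $n$. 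No ideas beyond induction, the Markov property, and a union bound are needed.
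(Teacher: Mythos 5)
Your proposal follows essentially the same route as the paper: an induction over rounds driven by the uniform one-step estimate of Equation~(\ref{eq 5}), combined with the Markov property and the inequality $\prod_i(1-p_i)\ge 1-\sum_i p_i$; where the paper implements the uniformity via the auxiliary events $A_t'=\{N_t=(\frac{41}{40})^t n_1\}$ and the monotonicity $Pr[A_t|A_{t'}']\le Pr[A_t|A_{t'}]$, you instead argue that the one-step bound holds for every $n\ge(\frac{41}{40})^{t'}n_1$ because $c_1^{-n}$ is decreasing in $n$, which is the same idea. One small inaccuracy: $Pr[A_t\mid A_0]=\prod_{t'=0}^{t-1}Pr[A_{t'+1}\mid A_{t'}]$ is not an identity since the events are not nested; you should instead lower-bound $Pr[A_t\mid A_0]$ by $Pr[\bigcap_{t'=1}^{t}A_{t'}\mid A_0]$, telescope the latter with each factor conditioned on the full past (which the Markov property reduces to the most recent round), and then the argument goes through exactly as you describe.
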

\begin{proof}
Let us define the events
\begin{align*}
& A_t:=\{N_t\ge (\frac{41}{40})^tn_1\}\quad t\ge 0\\
& A_t':=\{N_t= (\frac{41}{40})^tn_1\}\quad t\ge 0.
\end{align*}
By definition, for $0\le t'\le t$ we have $Pr[A_t|A_{t'}']\le Pr[A_t|A_{t'}]$. We apply this inequality several times later.

The goal is to prove that for $t\ge 1$
\begin{equation*}
Pr[A_t|A_0]\ge 1-\sum_{t'=0}^{t-1}c_1^{-(\frac{41}{40})^{t'}n_1}.
\end{equation*}
We apply proof by induction. As the base case, for $t=1$ we have
\[
Pr[A_1|A_0]\ge Pr[A_1|A_0']\stackrel{\text{Eq.~(\ref{eq 5})}}{\ge} 1-c_1^{-n_1}.
\]
As the induction hypothesis (I.H.), assume that $Pr[A_{t-1}|A_0]\ge 1-\sum_{t'=0}^{t-2}c_1^{-(\frac{41}{40})^{t'}n_1}$. Now, we have
\begin{align*}
& Pr[A_{t}|A_0]\ge Pr[A_{t}|A_0\cap A_{t-1}]\cdot Pr[A_{t-1}|A_0]\ge Pr[A_{t}|A_{t-1}]\cdot Pr[A_{t-1}|A_0]\ge\\ 
& Pr[A_{t}|A_{t-1}']\cdot Pr[A_{t-1}|A_0]\stackrel{\text{Eq.~(\ref{eq 5})}}{\ge} (1-c_1^{-(\frac{41}{40})^{t-1}n_1})\cdot Pr[A_{t-1}|A_0]\stackrel{\text{I.H.}}{\ge}\\
& (1-c_1^{-(\frac{41}{40})^{t-1}n_1}) (1-\sum_{t'=0}^{t-2}c_1^{-(\frac{41}{40})^{t'}n_1})\ge 1-\sum_{t'=0}^{t-1}c_1^{-(\frac{41}{40})^{t'}n_1}.
\end{align*} 
We need the assumption of $(\frac{41}{40})^{t-1}n_1< N^{\frac{2}{3}}$ to apply Equation~(\ref{eq 5}).\qed
\end{proof}

We have that $(\frac{41}{40})^{t}n_1\ge N^{\frac{2}{3}}$ for $t=\log_{\frac{41}{40}}N$ which implies that there exists some $t_2=\mathcal{O}(\log N)$ such that $(\frac{41}{40})^{t_2}n_1\ge N^{\frac{2}{3}}$ and $(\frac{41}{40})^{t_2-1}n_1< N^{\frac{2}{3}}$. Therefore by applying Claim~\ref{claim 10}, given $N_{t_1}\ge n_1$, we have $N_{t_1+t_2}\ge N^{\frac{2}{3}}$ with probability at least
\[
1-\sum_{t'=0}^{t_2-1}c_1^{-(\frac{41}{40})^{t'}n_1}.
\]
To bound this probability, let $t^{\prime\prime}=\log_{\frac{41}{40}}\log_{c_1}\log N$. Then,
\[
\sum_{t'=0}^{t_2-1}c_1^{-(\frac{41}{40})^{t'}n_1}\le \sum_{t'=0}^{t^{\prime\prime}-1}c_1^{-(\frac{41}{40})^{t'}n_1}+\sum_{t'=t^{\prime\prime}}^{t_2-1}c_1^{-(\frac{41}{40})^{t'}n_1}\le t^{\prime\prime}c_1^{-n_1}+t_2c_1^{-(\frac{41}{40})^{t^{\prime\prime}}n_1}.
\]
Since $t_2=\mathcal{O}(\log N)$ and $n_1=\log_c\log\log N$, this is upper-bounded by
\[
\mathcal{O}(\log\log\log N)c_1^{-\log_c\log\log N}+\mathcal{O}(\log N)(\log N)^{-\log_c\log\log N}=o(1)+o(1)=o(1).
\]
Hence, given $N_{t_1}\ge n_1$, we have $N_{t_1+t_2}\ge N^{\frac{2}{3}}$ for some $t_2= \mathcal{O}(\log N)$ with probability $1-o(1)$. 

Overall by phases 1 and 2, starting from $s$ Believers, a.a.s. the process reaches at least $N^{\frac{2}{3}}$ Believers in a logarithmic number of rounds.

\paragraph{Phase 3.} So far we showed that after $t_1+t_2=\mathcal{O}(\log N)$ rounds a.a.s. there are at least $N^{2/3}$ Believers. Now, we prove that from a state with at least $N^{2/3}$ Believers, more than half of the community will become Believers (i.e., the rumor takes over) in logarithmically many rounds. We assume that $n\le N/2$ in this phase.

Let us first lower-bound the value of $\mathbb{E}[N_t^n]$ in this setting. By Equation~(\ref{eq 1}), we have
\[
\mathbb{E}[N_t^n]\ge (1-d)n+((1-b)N-n)\frac{(r-1)!\sum_{j=1}^{r-1}{n\choose j}{N-n-1 \choose r-j-1}}{\prod_{j=1}^{r-1}(N-j)}
\]
To simplify this bound, we compute
\begin{align*}
& \frac{(r-1)!\sum_{j=1}^{r-1}{n\choose j}{N-n-1 \choose r-j-1}}{\prod_{j=1}^{r-1}(N-j)}=\frac{(r-1)!({N-1\choose r-1}-{N-n-1\choose r-1})}{\prod_{j=1}^{r-1}(N-j)}=\\
&\frac{(r-1)!\frac{\prod_{j=1}^{r-1}(N-j)}{(r-1)!}}{\prod_{j=1}^{r-1}(N-j)}-\frac{(r-1)!\frac{\prod_{j=1}^{r-1}(N-n-j)}{(r-1)!}}{\prod_{j=1}^{r-1}(N-j)}=1-\prod_{j=1}^{r-1}\frac{N-n-j}{N-j}\ge\\ 
& 1-\frac{N-n-1}{N-1}\ge 1-\frac{N-n}{N}=\frac{n}{N}.
\end{align*}
Therefore,
\[
\mathbb{E}[N_t^n]\ge (1-d)n+((1-b)N-n)\frac{n}{N}=(2-d-b-\frac{n}{N})n\ge (\frac{3}{2}-d-b)n
\]
where we applied $n\le N/2$ in the last step. Recall that we always assume that $d$ and $b$ are small constants, say $d\le 0.1$ and $b\le 0.25$. (See Section~\ref{assumptions}) Thus, 
\[
\mathbb{E}[N_t^n]\ge (1+\delta) n
\]
for some small constant $\delta>0$. This implies that in expectation the number of Believers is multiplied by $1+\delta$ in each round. Thus, we expect the rumor to take over in a logarithmic number of rounds. In the following, we prove that this expected behavior occurs a.a.s.

By applying the second part of Corollary~\ref{corollary 1} for $\epsilon=\delta/4$ and using $(1-\frac{\delta}{4})(1+\delta)\ge (1+\frac{\delta}{2})$, we have
\begin{align*}
& Pr[N_t^n\le (1+\frac{\delta}{2})n]\le Pr[N_t^n\le (1-\frac{\delta}{4})(1+\delta)n]\le Pr[N_t^n\le (1-\frac{\delta}{4})\mathbb{E}[N_t^n]]\le\\
& \exp(-\Theta(\frac{\mathbb{E}[N_t^n]^2}{N}))\le \exp(-\Theta(\frac{((1+\delta)n)^2}{N}))=\exp(-\Theta(\frac{n^2}{N})).
\end{align*}
Thus, there exists some constant $c_2>1$ such that
\begin{equation}
 \label{eq 25}
Pr[N_t^n\ge (1+\frac{\delta}{2})n]\ge 1-c_2^{-\frac{n^2}{N}}.
\end{equation}
Equation~(\ref{eq 25}) states that given $N_{t-1}=n$, for some $n=\omega(\sqrt{N})$, we have a.a.s. $N_t\ge (1+\frac{\delta}{2})n$. Using this statement and the union bound we prove that a.a.s. after a logarithmic number of round the rumor takes over.
\begin{claim}
\label{claim 11}
Assume that $(1+\frac{\delta}{2})^{t-1}N^{\frac{2}{3}}< \frac{N}{2}$ for some positive integer $t$. Then, given $N_{t_1+t_2}\ge N^{\frac{2}{3}}$, we have $N_{t_1+t_2+t}\ge (1+\frac{\delta}{2})^{t}N^{\frac{2}{3}}$ with probability at least $1-\sum_{t'=0}^{t-1}c_2^{-(1+\frac{\delta}{2})^{2t'}N^{\frac{1}{3}}}$.
\end{claim}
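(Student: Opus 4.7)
The plan is to prove Claim~\ref{claim 11} by induction on $t$, mirroring the argument used for Claim~\ref{claim 10}, but now replacing the one-step multiplicative improvement given by Equation~(\ref{eq 5}) with the sharper bound from Equation~(\ref{eq 25}) that is available in this larger-$n$ regime. Concretely, I would introduce the events
\[
A_t := \{N_{t_1+t_2+t} \ge (1+\tfrac{\delta}{2})^{t} N^{2/3}\}, \qquad A_t' := \{N_{t_1+t_2+t} = (1+\tfrac{\delta}{2})^{t} N^{2/3}\},
\]
so that the goal is $Pr[A_t \mid A_0] \ge 1 - \sum_{t'=0}^{t-1} c_2^{-(1+\delta/2)^{2t'} N^{1/3}}$, with $A_0$ corresponding to the hypothesis $N_{t_1+t_2} \ge N^{2/3}$. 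The monotonicity observation $Pr[A_t \mid A_{t'}'] \le Pr[A_t \mid A_{t'}]$ from the proof of Claim~\ref{claim 10} transfers verbatim since larger current counts only make the next-round event easier.

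For the base case $t=1$, applying Equation~(\ref{eq 25}) with $n = N^{2/3}$ gives $n^2/N = N^{1/3}$, so $Pr[A_1 \mid A_0'] \ge 1 - c_2^{-N^{1/3}}$, and by monotonicity this bound carries to $Pr[A_1 \mid A_0]$. For the inductive step, assume the claim at $t-1$. Then
\[
Pr[A_t \mid A_0] \ge Pr[A_t \mid A_{t-1}'] \cdot Pr[A_{t-1} \mid A_0],
\]
using the same chain of inequalities as in Claim~\ref{claim 10} (total probability, the monotonicity remark, and then replacing the conditioning event by the equality version $A_{t-1}'$). Invoking Equation~(\ref{eq 25}) with $n = (1+\delta/2)^{t-1} N^{2/3}$ yields
\[
Pr[A_t \mid A_{t-1}'] \ge 1 - c_2^{-(1+\delta/2)^{2(t-1)} N^{1/3}},
\]
since $n^2/N = (1+\delta/2)^{2(t-1)} N^{1/3}$; combining with the induction hypothesis and expanding the product while discarding the cross term (which is non-negative) gives exactly the claimed lower bound.

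The only condition that actually matters for applying Equation~(\ref{eq 25}) at round $t_1+t_2+t$ is that the number of Believers at the start of that round be at most $N/2$, which is precisely guaranteed by the hypothesis $(1+\delta/2)^{t-1} N^{2/3} < N/2$ together with the fact that we are conditioning on $A_{t-1}'$ rather than the larger event $A_{t-1}$ when invoking the equation. I do not anticipate a serious obstacle here: the whole argument is an essentially mechanical adaptation of the proof of Claim~\ref{claim 10}, with the sole substantive change being that the per-round failure probability is now doubly exponentially small in the current count (because $\mathbb{E}[N_t^n]^2/N$ scales like $n^2/N$), which is what produces the exponent $(1+\delta/2)^{2t'} N^{1/3}$ rather than the additive $(41/40)^{t'} n_1$ that appeared in phase~2.
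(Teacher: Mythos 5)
Your proposal is correct and follows essentially the same route as the paper, which itself proves Claim~\ref{claim 11} by declaring it ``analogous to the proof of Claim~\ref{claim 10}'' with Equation~(\ref{eq 25}) substituted for Equation~(\ref{eq 5}) and the hypothesis $(1+\frac{\delta}{2})^{t-1}N^{\frac{2}{3}}<\frac{N}{2}$ ensuring that Equation~(\ref{eq 25}) is applicable. Your computation of the per-step failure probability $c_2^{-(1+\delta/2)^{2t'}N^{1/3}}$ from $n^2/N$ with $n=(1+\delta/2)^{t'}N^{2/3}$ is exactly the intended bookkeeping, and your explicit time-shifted events $A_t$ indexed by $t_1+t_2+t$ are, if anything, slightly more careful than the paper's shorthand.
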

\begin{proof}
The proof is analogous to the proof of Claim~\ref{claim 10} by defining the events $A_t$ and $A_t'$ respectively to be $\{N_t\ge (1+\frac{\delta}{2})^{t}N^{\frac{2}{3}}\}$ and $\{N_t= (1+\frac{\delta}{2})^{t}N^{\frac{2}{3}}\}$. Furthermore, we use Equation~(\ref{eq 25}) instead of Equation~(\ref{eq 5}). Note that we need the condition $(1+\frac{\delta}{2})^{t-1}N^{\frac{2}{3}}< \frac{N}{2}$ to apply Equation~(\ref{eq 25}). \qed 
\end{proof}

Since $(1+\frac{\delta}{2})^tN^{\frac{2}{3}}\ge \frac{N}{2}$ for $t=\log_{1+\frac{\delta}{2}}N$, there exists some $t_3=\mathcal{O}(\log N)$ such that $(1+\frac{\delta}{2})^{t_3}N^{\frac{2}{3}}\ge \frac{N}{2}$ and $(1+\frac{\delta}{2})^{t_3-1}N^{\frac{2}{3}}< \frac{N}{2}$. Thus by applying Claim~\ref{claim 11}, given $N_{t_1+t_2}\ge N^{\frac{2}{3}}$, we have $N_{t_1+t_2+t_3}\ge \frac{N}{2}$ with probability at least
\[
1-\sum_{t'=0}^{t_3-1}c_2^{-(1+\frac{\delta}{2})^{2t'}N^{\frac{1}{3}}}\ge 1-t_3\exp(-\Theta(N^{\frac{1}{3}}))\ge 1-\mathcal{O}(\log N)\exp(-\Theta(N^{\frac{1}{3}}))
\]
which is equal to $1-o(1)$. Thus, given $N_{t_1+t_2}\ge N^{\frac{2}{3}}$ a.a.s $N_{t_1+t_2+t_3}\ge \frac{N}{2}$ for some $t_3=\mathcal{O}(\log N)$.

Overall, combining phases 1, 2, and 3 implies that by starting from $s$ Believers the rumor takes over a.a.s. in $\mathcal{O}(\log N)$ rounds.
\begin{remark}
The error probability in Equation~(\ref{eq 25}) is of the form $\exp(-\Theta(\frac{n^2}{N}))$, which is bounded by a constant for $n=\mathcal{O}(\sqrt{N})$. This should explain why in phases 1 and 2, where $n\le N^{\frac{2}{3}}$, we applied a different approach. However, our argument works for any choice of $N^{\frac{1}{2}+\epsilon'}$ for small $\epsilon'>0$ instead of $N^{\frac{2}{3}}$.
\end{remark}

\paragraph{Tightness.} We claim that the upper bound of $\mathcal{O}(\log N)$ is tight. Note that starting with $s$ Believers, the number of Believers is multiplied by at most an $r$ factor in each round. Thus, it takes the rumor at least $\log_r \frac{N}{2s}=\Omega(\log N)$ to take over.

\subsection{Proof of Part (ii): $\mathbf{m\ge 2}$}
\label{m=2}
In this section, we analyze the $m$-rumor spreading model for $m\ge 2$. We prove that if even initially Believers constitute a constant fraction of the community (i.e., $N_0=\alpha N$ for some small constant $\alpha>0$), the rumor dies out a.a.s. in $\mathcal{O}(\log N)$ rounds. We divide our analysis into two phases. In phase 1, we show that starting from $\alpha N$ Believers, the process reaches a state with less than $F(N,d):=\log_{\frac{1}{d}}\log\log N$ Believers a.a.s. in $\mathcal{O}(\log N)$ rounds. Then in phase 2, we prove that from a state with at most $F(N,d)$ Believers, all agents will become Indifferents, except Seeds, a.a.s. in logarithmically many rounds. We should mention that value of $F(N,d)$ has been chosen to make the calculations straightforward.

\paragraph{Phase 1.} We show that from a state with at most $\alpha N$ Believers, the process reaches at most $F(N,d)$ Believers in logarithmically many rounds a.a.s. In this phase we always assume that $F(N,d)\le n\le \alpha N$. The main idea is to show that in this setting in each round at most $\frac{1}{4}dn$ Indifferents become Believers and at least $\frac{3}{4}dn$ Believers become Indifferents. Thus, the number of Believers decreases by a constant factor in each round. To make this argument more formal, let us define $\mathcal{E}_1$ to be the event that at most $\frac{1}{4}dn$ agents switch from Indifferent to Believer in round $t$ given that $N_{t-1}=n$. Furthermore, let $\mathcal{E}_2$ be the event that the number of Believers who become Indifferents is at least $\frac{3}{4}dn$ in round $t$ given $N_{t-1}=n$. 

We first bound the probabilities $Pr[\mathcal{E}_1]$ and $Pr[\mathcal{E}_2]$ respectively in Claim~\ref{claim 4} and Claim~\ref{claim 5}. Building on them, we provide a lower bound on the probability $Pr[N_t^n\ge (1-\frac{d}{2})n]$.
\begin{claim}
\label{claim 4}
$Pr[\mathcal{E}_1]\ge 1-\exp(-\Theta(n))$.
\end{claim}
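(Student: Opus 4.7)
The plan is a direct combinatorial counting argument in the spirit of Claim~\ref{claim 2}. I will show that the fraction of seat-assignment permutations producing more than $\frac{1}{4}dn$ Indifferent-to-Believer conversions is at most $\exp(-\Theta(n))$. The starting observation is that since $m\ge 2$, every such conversion requires the newly-converted Indifferent to share its room with at least $2$ Believers, and each ``heavy'' room (a room containing $\ge 2$ Believers) can produce at most $r-2$ conversions. Consequently, any permutation in which $\mathcal{E}_1$ fails must contain at least $K := \lceil dn/(4(r-2)) \rceil = \Theta(n)$ heavy rooms.

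Next I would upper bound the number of permutations exhibiting at least $K$ heavy rooms by a union-bound count over witnesses: a choice of $K$ rooms together with $K$ disjoint pairs of Believers, pair $i$ seated inside room $i$. The standard count of such witnesses is
$$\binom{N/r}{K} \cdot \frac{n!}{(n-2K)!\, 2^K} \cdot [r(r-1)]^K \cdot (N-2K)!,$$
where the four factors represent respectively the choice of $K$ heavy rooms, the choice of an unordered pair of Believers for each, the seating of the two marked Believers in their room, and the arrangement of the remaining $N-2K$ agents. Dividing by $N!$ and applying the standard estimates $\binom{N/r}{K} \le (eN/(rK))^K$, $n!/(n-2K)! \le n^{2K}$, and $(N-2K)!/N! \le (N/2)^{-2K}$ (valid since $2K \le N/2$ for $\alpha$ small) reduces the ratio to an expression of the form $(Cn/N)^K$ where $C = C(r,d)>0$ depends only on $r$ and $d$.

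Substituting the hypothesis $n \le \alpha N$ yields $(C\alpha)^K$. Because $C$ does not depend on $N$, $\alpha$ can be chosen once and for all so that $C\alpha \le \tfrac{1}{2}$, giving the desired bound $(1/2)^K = \exp(-\Theta(n))$. The main obstacle, mirroring what happens in Claim~\ref{claim 2}, is the Stirling-type bookkeeping together with checking that the union bound actually upper bounds $Pr[\lnot \mathcal{E}_1]$: one must verify that every permutation with $\ge K$ heavy rooms admits a disjoint-pair witness, which it does, since one can extract a disjoint pair of Believers from each of the $K$ heavy rooms without conflict.
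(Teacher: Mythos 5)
Your proposal is correct and follows essentially the same route as the paper's proof: a witness-counting union bound over choices of $\Theta(n)$ heavy rooms with a disjoint pair of Believers planted in each, Stirling-type estimates reducing the permutation ratio to $(Cn/N)^{\Theta(n)}$, and the hypothesis $n\le\alpha N$ with $\alpha$ small to make the base less than one. The only (immaterial) difference is that you bound the conversions per heavy room by $r-2$ rather than $r$, which just changes the constant in the exponent.
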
 
\begin{proof}
We claim that
\begin{align*}
Pr[\overline{\mathcal{E}_1}]\le \frac{{\frac{N}{r}\choose \frac{dn}{4r}}\ {n\choose \frac{dn}{2r}}\ \frac{(\frac{dn}{2r})!}{2^{\frac{dn}{4r}}}\ (r(r-1))^{\frac{dn}{4r}}\ (N-\frac{dn}{2r})!}{N!}
\end{align*}
where $\overline{\mathcal{E}_1}$ is the complement of event $\mathcal{E}_1$. The numerator is an upper bound on the number of permutations which result in the generation of at least $\frac{1}{4}dn$ new Believers (Indifferents who become Believers) out of all possible $N!$ permutations. Note that in such permutations, there must exist $\frac{dn}{4r}$ rooms such that each of them contains at least two Believers, otherwise less than $\frac{1}{4}dn$ Indifferents become Believers for $m\ge 2$. There are ${\frac{N}{r}\choose \frac{dn}{4r}}$ possibilities to select $\frac{dn}{4r}$ rooms and ${n\choose \frac{dn}{2r}}$ ways to choose $\frac{dn}{2r}$ Believers. We claim that the number of possibilities to assign to each of these rooms exactly two of the Believers is equal to $(\frac{dn}{2r})!/2^{\frac{dn}{4r}}$. This is true because the number of choices to pair $\frac{dn}{4r}$ agents is $\frac{(\frac{dn}{2r})!}{(\frac{dn}{4r})!\ 2^{\frac{dn}{4r}}}$ (see e.g.~\cite{andrews1998theory}) and there are $(\frac{dn}{4r})!$ ways of distributing these pairs into the rooms; thus, the number of all possibilities is equal to
\[
\frac{(\frac{dn}{2r})!}{(\frac{dn}{4r})!\ 2^{\frac{dn}{4r}}}(\frac{dn}{4r})!=\frac{(\frac{dn}{2r})!}{2^{\frac{dn}{4r}}}.
\]
Finally, there are $r(r-1)$ possible placements for two Believers in a room of size $r$ and there are $(N-\frac{dn}{2r})!$ possible placements for the remaining agents. In this way, we clearly count all our desired permutations at least once (we are actually over-counting, but this is fine since we need an upper bound).

By ${n \choose k}\le (\frac{ne}{k})^k$, Stirling's approximation~\cite{marsaglia1990new}, we get
\begin{align*}
& Pr[\overline{\mathcal{E}_1}]\le \frac{(\frac{4eN}{dn})^{\frac{dn}{4r}}\ (\frac{2er}{d})^{\frac{dn}{2r}}\ (\frac{dn}{2r})^{\frac{dn}{2r}}\ r^{\frac{dn}{2r}}}{2^{\frac{dn}{4r}}\ \prod_{j=0}^{\frac{dn}{2r}-1}(N-j)}\le \frac{(\frac{2eN}{dn})^{\frac{dn}{4r}}e^{\frac{dn}{2r}}n^{\frac{dn}{2r}}r^{\frac{dn}{2r}}}{(\frac{9}{10}N)^{\frac{dn}{2r}}}=\\
& (\frac{2e^3r^210^2}{9^2d})^{\frac{dn}{4r}}(\frac{n}{N})^{\frac{dn}{4r}}\le (\frac{2e^3r^210^2}{9^2d})^{\frac{dn}{4r}}\alpha^{\frac{dn}{4r}}=\exp(-\Theta(n)).
\end{align*}
where we used $n/N\le \alpha$. Furthermore, we applied twice that $\alpha$ is a sufficiently small constant. Firstly, we have $(N-j)\ge \frac{9}{10}N$ for $0\leq j\leq \frac{dn}{2r}-1$ by utilizing $\frac{dn}{2r}-1\le \frac{1}{10}N$, which is true for $\alpha\le \frac{2r}{10d}=\frac{r}{5d}$. Secondly, we applied $\alpha<\frac{9^2d}{2e^3r^210^2}$ in the last step. Thus, we need constant $\alpha$ to be smaller than $\min (\frac{r}{5d},\frac{9^2d}{2e^3r^210^2})$, which holds since we assume $\alpha$ to be a sufficiently small constant. \qed
\end{proof}

\begin{claim}
\label{claim 5}
$Pr[\mathcal{E}_2]\ge 1-\exp(-\Theta(n))$.
\end{claim}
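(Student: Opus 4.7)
The plan is to prove this by the same Chernoff-bound template used in Claim~\ref{claim 3}, but applied to the lower tail instead of the upper tail. Let $X_t^n$ denote the number of Believers that switch to Indifferent in round $t$ given $N_{t-1}=n$. Label the $n-s$ non-Seed Believers from $1$ to $n-s$ and let $x_i$ be the indicator that the $i$-th such Believer flips, so $x_i\sim\mathrm{Bernoulli}(d)$ independently (Seeds never flip). Then $X_t^n=\sum_{i=1}^{n-s}x_i$ with $\mathbb{E}[X_t^n]=(n-s)d$.

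I would argue as follows. Since $s$ is a fixed constant and we are working in the regime $n\ge F(N,d)=\log_{1/d}\log\log N\to\infty$, we have $(n-s)d\ge (1-o(1))dn$. Choose a fixed small constant $\epsilon>0$ (for instance $\epsilon=1/5$) such that $(1-\epsilon)(1-o(1))\ge 3/4$ holds for all sufficiently large $N$; then $(1-\epsilon)\mathbb{E}[X_t^n]\ge \tfrac{3}{4}dn$. Applying the lower-tail Chernoff bound from Theorem~\ref{Chernoff} to the independent sum $X_t^n$ gives
\[
Pr\!\left[X_t^n<\tfrac{3}{4}dn\right]\le Pr\!\left[X_t^n\le(1-\epsilon)\mathbb{E}[X_t^n]\right]\le \exp\!\left(-\tfrac{\epsilon^2\mathbb{E}[X_t^n]}{2}\right)=\exp(-\Theta(n)),
\]
since $d$ is a positive constant and $\mathbb{E}[X_t^n]=\Theta(n)$. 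Taking complements yields $Pr[\mathcal{E}_2]\ge 1-\exp(-\Theta(n))$, which is the desired bound.

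The only thing that requires a bit of care is the small gap between the threshold $\tfrac{3}{4}dn$ and the actual expectation $(n-s)d$: one must verify that the loss from the $s$ Seeds is negligible for $n\ge F(N,d)$, so that a single fixed $\epsilon$ works uniformly throughout the phase. Beyond that, the argument is completely parallel to Claim~\ref{claim 3} — the independence of the flipping events across the non-Seed Believers is immediate from the model definition, so no Azuma-type refinement is needed here.
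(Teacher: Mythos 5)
Your proposal is correct and follows essentially the same route as the paper: both define $X_t^n$ as a sum of independent Bernoulli indicators over the non-Seed Believers, note $\mathbb{E}[X_t^n]=(n-s)d$, absorb the $O(s)$ loss using $n\ge F(N,d)\to\infty$, and apply the lower-tail Chernoff bound (the paper uses $\epsilon=1/6$ with the explicit bound $\tfrac{9}{10}dn\le\mathbb{E}[X_t^n]$, you use $\epsilon=1/5$; the difference is cosmetic). No gaps.
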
 
\begin{proof}
Define random variable $X^n_t$ to be the number of Believers who switch into Indifferents in round $t$ given $N_{t-1}=n$. Recall that each Believer becomes Indifferent with probability $d$ independently, except Seeds. Consider an arbitrary labeling from $1$ to $n$ on the Believers and define Bernoulli random variable $x_i$ for $1\le i\le n$ to be 1 if and only if the $i$-th Believer becomes Indifferent. We know that $Pr[x_i=1]$ is equal to $d$ if the $i$-the agent is Other and 0 if it is a Seed. Clearly, $X_t^n=\sum_{i=1}^{n}x_i$, which implies that $\mathbb{E}[X_t^n]=(n-s)d$. Thus, $\frac{9}{10}dn\le \mathbb{E}[X^t_n]$ by using $sd\le \frac{1}{10}n$, which is true since $s$ is a small constant and $n\ge F(N,d)= \log_{\frac{1}{d}}\log\log N$. Notice that $x_i$s are independent; thus, by applying Chernoff bound (see Theorem~\ref{Chernoff}) we have
\begin{align*}
& Pr[\mathcal{E}_2]=Pr[\frac{3}{4}dn\le X_t^n]=Pr[(1-\frac{1}{6})\frac{9}{10}dn\le X_t^n]\ge Pr[(1-\frac{1}{6})\mathbb{E}[X_t^n]\le X_t^n]=\\
& 1-Pr[X_t^n< (1-\frac{1}{6})\mathbb{E}[X_t^n]]\ge 1-\exp(-\frac{(\frac{1}{6})^2\mathbb{E}[X_t^n]}{2})\ge 1-\exp(-\frac{(\frac{1}{6})^2\frac{9}{10}dn}{2})
\end{align*}
which is equal to $1-\exp(-\Theta(n))$. \qed
\end{proof}

If both events $\mathcal{E}_1$ and $\mathcal{E}_2$ occur, at most $\frac{1}{4}dn$ Indifferents become Believers and at least $\frac{3}{4}dn$ Believers become Indifferents, which implies that the number of Believers decreases from $n$ to $(1-\frac{d}{2})n$. Since, events $\mathcal{E}_1$ and $\mathcal{E}_2$ are independent, applying Claims~\ref{claim 4} and~\ref{claim 5} yields
\begin{align*}
& Pr[N_t^n\le (1-\frac{d}{2})n]\ge Pr[\mathcal{E}_1\cap \mathcal{E}_2]=Pr[\mathcal{E}_1]\cdot Pr[\mathcal{E}_2]\ge\\ &(1-\exp(-\Theta(n)))\ (1-\exp(-\Theta(n)))= 1-\exp(-\Theta(n)).
\end{align*}
Therefore, there exists some constant $c>1$ such that for $F(N,d)\le n\le \alpha n$, we have
\begin{equation}
\label{eq 13}
Pr[N_t^n\le (1-\frac{d}{2})n]\ge 1-c^{-n}.
\end{equation}
This implies that if in round $t-1$ there are $n$ Believers a.a.s. in the $t$-th round there are at most $(1-\frac{d}{2})n$ Believers. Building on this statement and applying the union bound, we prove that from $N_0\le \alpha N$, the number of Believers decreases to $F(N,d)$ a.a.s. in $\mathcal{O}(\log N)$ rounds. To make this argument more formal let us provide Claim~\ref{claim 6}.

\begin{claim}
\label{claim 6}
Assume that $(1-\frac{d}{2})^{t-1}\alpha N\ge F(N,d)$ for some positive integer $t$. Then,
\[
Pr[N_t\le (1-\frac{d}{2})^t\alpha N|N_0\le\alpha N]\ge 1-\sum_{t'=0}^{t-1}c^{-(1-\frac{d}{2})^{t'}\alpha N}.
\]
\end{claim}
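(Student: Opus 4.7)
The plan is to prove Claim~\ref{claim 6} by induction on $t$, closely mirroring the proof of Claim~\ref{claim 10} (with the direction of the inequalities flipped, since here we want an upper bound on $N_t$ rather than a lower bound). I would introduce the events
\[
A_t := \{N_t \le (1-\tfrac{d}{2})^t\, \alpha N\}, \qquad A_t' := \{N_t = (1-\tfrac{d}{2})^t\, \alpha N\}, \qquad t \ge 0.
\]
Conditioning on $A_{t-1}'$ fixes $N_{t-1}$ at the largest value permitted by $A_{t-1}$, which is the worst case for the event $A_t$; hence $Pr[A_t \mid A_{t-1}] \ge Pr[A_t \mid A_{t-1}']$, exactly analogous to the corresponding inequality used in Claim~\ref{claim 10}. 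Applying Equation~(\ref{eq 13}) with $n = (1-\tfrac{d}{2})^{t-1}\alpha N$ then yields $Pr[A_t \mid A_{t-1}'] \ge 1 - c^{-(1-d/2)^{t-1}\alpha N}$, where the hypothesis $(1-\tfrac{d}{2})^{t-1}\alpha N \ge F(N,d)$ is exactly what is needed to keep $n$ inside the valid range $[F(N,d),\alpha N]$.

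For the induction, the base case $t=1$ follows directly from Equation~(\ref{eq 13}), after noting that without loss of generality one may take $N_0 = \alpha N$, since starting from a smaller initial value of $N_0$ the same dominance argument only makes the bound stronger. For the inductive step, I chain the conditional probabilities as in Claim~\ref{claim 10}:
\begin{align*}
Pr[A_t \mid A_0] &\ge Pr[A_t \cap A_{t-1} \mid A_0] \\
&= Pr[A_t \mid A_{t-1} \cap A_0]\cdot Pr[A_{t-1} \mid A_0] \\
&\ge Pr[A_t \mid A_{t-1}']\cdot Pr[A_{t-1} \mid A_0],
\end{align*}
where the last inequality combines the Markov property (to drop the $A_0$ from the conditioning) with the worst-case bound above. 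Substituting in the lower bound on $Pr[A_t \mid A_{t-1}']$ and the inductive hypothesis on $Pr[A_{t-1} \mid A_0]$, and using the elementary estimate $(1-x)(1-y) \ge 1 - x - y$ for $x,y \ge 0$, the two factors telescope into the claimed bound $1 - \sum_{t'=0}^{t-1} c^{-(1-d/2)^{t'}\alpha N}$.

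The main obstacle is the mismatch between the inequality event $A_{t-1}$ appearing in the inductive hypothesis and the equality $N_{t-1}=n$ required to apply Equation~(\ref{eq 13}). This is resolved by the worst-case (stochastic monotonicity) observation: fixing $N_{t-1}$ at the largest value allowed under $A_{t-1}$ can only decrease the probability of $A_t$, so the bound obtained in the equality case is valid as a lower bound on $Pr[A_t \mid A_{t-1}]$. This is precisely the step that is used, but not belabored, in the proof of Claim~\ref{claim 10}, and it transfers here without essential change.
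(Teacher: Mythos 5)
Your proposal is correct and follows essentially the same route as the paper's own proof: the same events $A_t$ and $A_t'$, the same monotonicity step $Pr[A_t\mid A_{t-1}]\ge Pr[A_t\mid A_{t-1}']$ to reduce to the equality case where Equation~(\ref{eq 13}) applies, and the same inductive chaining with $(1-x)(1-y)\ge 1-x-y$. The hypothesis $(1-\frac{d}{2})^{t-1}\alpha N\ge F(N,d)$ is invoked for exactly the same purpose in both arguments, namely to keep $n$ in the range where Equation~(\ref{eq 13}) is valid.
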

\begin{proof}
We apply an inductive argument similar to the one in the proof of Claim~\ref{claim 10}. Let us define the events
\begin{align*}
& A_t:=\{N_t\le (1-\frac{d}{2})^t\alpha N\}\quad t\ge 0\\
& A_t':=\{N_t= (1-\frac{d}{2})^t\alpha N\}\quad t\ge 0.
\end{align*}
Note that by definition, for $0\le t'\le t$ we have $Pr[A_t|A_{t'}'] \le Pr[A_t|A_{t'}]$, which we are going to utilize several times later.

Our goal is to prove that for $t\ge 1$
\begin{equation*}
\label{eq 14}
Pr[A_t|A_0]\ge 1-\sum_{t'=0}^{t-1}c^{-(1-\frac{d}{2})^{t'}\alpha N}.
\end{equation*}
Now, we do induction on $t$. As the base case, for $t=1$
\[
Pr[A_1|A_0]\ge Pr[A_1|A_0']\stackrel{\text{Eq.~(\ref{eq 13})}}{\ge} 1-c^{-\alpha N}.
\]
As the induction hypothesis (I.H.), assume that $Pr[A_{t-1}|A_0]\ge 1-\sum_{t'=0}^{t-2}c^{-(1-\frac{d}{2})^{t'}\alpha N}$. Now, we have
\begin{align*}
& Pr[A_{t}|A_0]\ge Pr[A_{t}|A_0\cap A_{t-1}]\cdot Pr[A_{t-1}|A_0]\ge Pr[A_{t}|A_{t-1}]\cdot Pr[A_{t-1}|A_0]\ge\\ 
& Pr[A_{t}|A_{t-1}']\cdot Pr[A_{t-1}|A_0]\stackrel{\text{Eq.~(\ref{eq 13})}}{\ge} (1-c^{-(1-\frac{d}{2})^{t-1}\alpha N})\cdot Pr[A_{t-1}|A_0]\stackrel{\text{I.H.}}{\ge}\\
& (1-c^{-(1-\frac{d}{2})^{t-1}\alpha N}) (1-\sum_{t'=0}^{t-2}c^{-(1-\frac{d}{2})^{t'}\alpha N})\ge 1-\sum_{t'=0}^{t-1}c^{-(1-\frac{d}{2})^{t'}\alpha N}.
\end{align*} 
Notice that we need the condition $(1-\frac{d}{2})^{t-1}\alpha N\ge F(N,d)$ to apply Equation~(\ref{eq 13}). \qed
\end{proof}

Now, by applying Claim~\ref{claim 6} for a suitable choice of $t$, we finish the proof of phase 1. Let us define $$T:=\lceil \log_{(1-\frac{d}{2})}\frac{F(N,d)}{\alpha N}\rceil.$$ We have that $(1-\frac{d}{2})^{T-1}\alpha N\ge F(N,d)$ because
\begin{equation}
\label{eq 15}
(1-\frac{d}{2})^{T-1}\ge (1-\frac{d}{2})^{\log_{(1-\frac{d}{2})}\frac{F(N,d)}{\alpha N}}=\frac{F(N,d)}{\alpha N}.
\end{equation} 
Thus by Claim~\ref{claim 6}, given $N_0\le \alpha N$, we have
\[
N_T\le (1-\frac{d}{2})^T\alpha N\le F(N,d)
\]
with probability at least
\[
1-\sum_{t'=0}^{T-1}c^{-(1-\frac{d}{2})^{t'}\alpha N}.
\]
We show that this probability is equal to $1-o(1)$ and $T=\mathcal{O}(\log N)$. Therefore, there will exist at most $F(N,d)=\log_{\frac{1}{d}}\log\log N$ Believers a.a.s. after $\mathcal{O}(\log N)$ rounds.

Since $(1-\frac{d}{2})=(\frac{2}{2-d})^{-1}$,
\[
T=\lceil \log_{(1-\frac{d}{2})}\frac{F(N,d)}{\alpha N}\rceil= \lceil\log_{\frac{2}{2-d}}\frac{\alpha N}{F(N,d)}\rceil\le\lceil \log_{\frac{2}{2-d}}N\rceil=\mathcal{O}(\log N).
\]
It remains to show that $\sum_{t'=0}^{T-1}c^{-(1-\frac{d}{2})^{t'}\alpha N}=o(1)$. We have
\begin{align*}
& \sum_{t'=0}^{T-1}c^{-(1-\frac{d}{2})^{t'}\alpha N}\le \sum_{t'=0}^{T-1}\frac{1}{(1-\frac{d}{2})^{t'}\alpha N}=\frac{1}{(1-\frac{d}{2})^{T-1}\alpha N}\sum_{t'=0}^{T-1}\frac{1}{(1-\frac{d}{2})^{t'-(T-1)}}\stackrel{\text{Eq.~(\ref{eq 15})}}{\le}\\
& \frac{1}{F(N,d)}\sum_{t'=0}^{T-1}(1-\frac{d}{2})^{t'}\le \frac{1}{F(N,d)}\sum_{t'=0}^{\infty}(1-\frac{d}{2})^{t'}=\frac{2}{F(N,d)d}=o(1).
\end{align*}
In the second-to-last step we used that the geometric series $\sum_{t'=0}^{\infty}(1-\frac{d}{2})^{t'}$ is equal to $\frac{1}{1-(1-d/2)}=\frac{2}{d}$.

\paragraph{Phase 2.} So far we proved that if $N_0\le \alpha N$, then $N_{t_1}\le F(N,d)=\log_{\frac{1}{d}}\log\log N$ a.a.s. for some $t_1=\mathcal{O}(\log N)$. In this phase, we show that from a state with at most $F(N,d)$ Believers, there is no Believer, except Seeds, in a logarithmic number of rounds a.a.s.

Let us first prove Claim~\ref{claim 7}, which asserts if in some state the number of Believers is at most $F(N,d)$, their number does not overpass $F(N,d)$ a.a.s. during the next $\log N$ rounds.
\begin{claim}
\label{claim 7}
Given $N_{t_1}\le F(N,d)$, we have $N_{t_1+t}\le F(N,d)$ for $1\leq t\le \log N$ a.a.s.
\end{claim}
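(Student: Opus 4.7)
The plan is to argue that once the number of Believers has dropped to at most $F(N,d)=\log_{1/d}\log\log N$, it is extraordinarily unlikely to grow during the next $\log N$ rounds. Since $m\ge 2$, creating a single new Believer in a given round requires at least two of the existing Believers to land in the same room of size $r$. With only $n\le F(N,d)=\Theta(\log\log\log N)$ Believers placed uniformly at random among the $N$ seats, this pair-collision event has probability $\mathcal{O}(n^2/N)$, so a union bound over the $\log N$ rounds will show that a.a.s.\ no Indifferent ever becomes a Believer during the window. Since every non-Seed Believer can only stay or become Indifferent, on this event the total number of Believers is non-increasing and $N_{t_1+t}\le N_{t_1}\le F(N,d)$ for all $1\le t\le \log N$.

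The quantitative heart of the proof is to bound, for $n\le F(N,d)$, the probability $p_n$ that at least one Indifferent becomes a Believer in the next round given $N_{t-1}=n$. Fix an unordered pair of Believers; conditioning on the seat of the first, the probability that the second occupies one of the $r-1$ remaining seats in the same room is exactly $(r-1)/(N-1)$. A union bound over the at most $\binom{n}{2}$ pairs yields
\[
p_n \ \le\ \binom{n}{2}\cdot\frac{r-1}{N-1}\ =\ \mathcal{O}\!\left(\frac{n^2}{N}\right)\ =\ \mathcal{O}\!\left(\frac{(\log\log\log N)^2}{N}\right).
\]

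For $1\le t\le \lfloor\log N\rfloor$, let $G_t$ denote the event that no Indifferent becomes a Believer in round $t_1+t$. Conditional on $N_{t_1}\le F(N,d)$ and on $G_1\cap\cdots\cap G_{t-1}$, the sequence $N_{t_1},N_{t_1+1},\ldots,N_{t_1+t-1}$ is non-increasing, so $N_{t_1+t-1}\le F(N,d)$ and the bound of the previous paragraph applies to round $t_1+t$. Chaining these conditional estimates in the same spirit as Claim~\ref{claim 10} and Claim~\ref{claim 6} gives
\[
Pr\!\left[\bigcap_{t=1}^{\lfloor \log N\rfloor} G_t\ \bigg|\ N_{t_1}\le F(N,d)\right]\ \ge\ 1-\log N\cdot \mathcal{O}\!\left(\frac{(\log\log\log N)^2}{N}\right)\ =\ 1-o(1),
\]
and on this event $N_{t_1+t}\le N_{t_1}\le F(N,d)$ for every $1\le t\le \log N$, which is exactly the claim.

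The main, relatively mild, obstacle is keeping the conditioning honest: the per-round bound on $p_n$ is valid only when $n\le F(N,d)$, so one has to guarantee that this hypothesis is preserved from round to round. The inductive chaining above handles this automatically, because while no new Believer has been generated the count of Believers can only decrease, so $N_{t_1+t-1}\le F(N,d)$ remains valid and the per-round bound continues to apply. No concentration inequality beyond the union bound is needed here, since the per-round failure probability is already polynomially small in $N$.
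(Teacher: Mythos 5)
Your proposal is correct and follows essentially the same route as the paper: both reduce the event ``the count exceeds $F(N,d)$'' to ``two Believers land in the same room,'' bound that probability by roughly $F(N,d)^2 r/(N-1)$ (the paper by counting permutations, you by a direct pairwise-collision union bound, which give the same quantity), and then chain the per-round estimate over the $\log N$ rounds. Your choice to condition on ``no new Believer is created'' rather than on ``the count stays below $F(N,d)$'' is a harmless variant of the paper's telescoping argument.
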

\begin{proof}
Let us define for $0\le t\le \log N$ the event $B_t:=\{N_{t_1+t}\le F(N,d)\}$. We want to prove that $Pr[\bigwedge_{t=1}^{\log N}B_t|B_0]=1-o(1)$. We have
\[
Pr[\bigwedge_{t=1}^{\log N}B_t|B_0]=\prod_{t=1}^{\log N}Pr[B_t|\bigwedge_{t'=0}^{t-1}B_{t'}]
\] 
where the right-hand side is a telescoping product. We prove that
\begin{equation}
\label{eq 2}
Pr[B_t|\bigwedge_{t'=0}^{t-1}B_{t'}]\ge 1-\frac{1}{\sqrt{N}}.
\end{equation}
Therefore,
\[
Pr[\bigwedge_{t=1}^{\log N}B_t|B_0]\ge (1-\frac{1}{\sqrt{N}})^{\log N}\ge 1-\frac{\log N}{\sqrt{N}}=1-o(1).
\]
It remains to prove that Equation~(\ref{eq 2}) holds. Let $\overline{B}_t$ be the complement of $B_t$. We claim that
\[
Pr[\overline{B}_t|N_{t_1+t-1}\le F(N,d)]\le \frac{\frac{N}{r}{F(N,d)\choose 2}{r\choose 2}(N-2)!}{N!}.
\]
Given $N_{t_1+t-1}\le F(N,d)$, for event $\overline{B}_t$ to happen (i.e., $N_{t_1+t}>F(N,d)$), at least two Believers must share a room in round $t_1+t$, to generate a new Believer. The numerator is an upper bound on the number of permutations in which at least two Believers share a room, from all $N!$ possible permutations. There are $N/r$ possibilities to select a room and at most ${F(N,d)\choose 2}$ possibilities to choose two Believers. There are ${r\choose 2}$ ways to place the two Believers in the room and $(N-2)!$ possibilities to place the remaining $N-2$ agents. We count each permutation with at least two Believers in a room at least once in this way. We might actually count a permutation several times, but it does not matter since we are interested in an upper bound.

Simplifying the right-hand side gives us
\[
Pr[\overline{B}_t|N_{t+t_1-1}\le F(N,d)]\le \frac{F(N,d)^2r}{N-1}
\]
which is smaller than $\frac{1}{\sqrt{N}}$ since $F(N,d)=\log_{\frac{1}{d}}\log\log N$. Thus, we get $Pr[B_t|N_{t+t_1-1}\le F(N,d)]\ge 1-\frac{1}{\sqrt{N}}$. By applying the definition of $B_t$, we have
\[
Pr[B_t|\bigwedge_{t'=0}^{t-1}B_{t'}]\ge Pr[B_t|N_{t+t_1-1}\le F(N,d)]\ge 1-\frac{1}{\sqrt{N}}
\]
which finishes the proof. \qed
\end{proof}

By Claim~\ref{claim 7}, given $N_{t_1}\le F(N,d)$, we have a.a.s. $N_{t_1+t}\le F(N,d)$ for $1\le t\le \log N$, that is, for $\log N$ rounds the number of Believers will not exceed $F(N,d)$. Furthermore, in each round a Believer becomes Indifferent independently with probability $d$, except Seeds. Thus, in each of these $\log N$ rounds the probability that all Believers in that round (whose number is at most $F(N,d)$) become Indifferents is at least
\[
d^{\log_{\frac{1}{d}}\log\log N}=\frac{1}{\log\log N}.
\]
The probability that this does not occur in any of these $\log N$ rounds is upper-bounded by
\[
(1-\frac{1}{\log\log N})^{\log N}\le \exp(-\frac{\log N}{\log\log N})=o(1)
\]
where we used $1-x\le \exp(-x)$. Thus, a.a.s. after at most $t_1+\log N=\mathcal{O}(\log N)$ rounds, the rumor dies out.

\paragraph{Tightness.} We claim that the upper bound of $\mathcal{O}(\log N)$ is tight. That is, if the process starts with $\alpha N$ Believers, it takes the rumor at least a logarithmic number of rounds to die out. Assume that $N_0=\alpha N$ and no new Believer is generated during the process, i.e., no Indifferent becomes a Believer. Even under this assumption, in each round in expectation a $d$ fraction of the Believers will become Indifferents. Thus in expectation, it takes the process logarithmically many rounds to reach the state where all agents are Indifferents, except Seeds. To turn this argument into an a.a.s. statement, we can apply Chernoff bound (see Theorem~\ref{Chernoff}). Assume $X_t^n$ denote the number of Believers who become Indifferent in the $t$-th round, given $N_{t-1}=n$. Since each Believer becomes Indifferent independently with probability $d$, except $s$ Seeds, then $\mathbb{E}[X_t^n]=(n-s)d\le nd$. Thus, for a small constant $\epsilon>0$ we have
\begin{align*}
Pr[(1+\epsilon)nd\le X_t^n]\le Pr[(1+\epsilon)\mathbb{E}[X_t^n]\le X_t^n]\le \exp(-\Theta(n)).
\end{align*}
Thus, if there are $n=\omega(1)$ Believers in round $t-1$, a.a.s. in round $t$ at most $(1+\epsilon)d$ fraction of them will become Indifferents. Now, applying an inductive argument similar to the one from Claim~\ref{claim 6}, it is easy to show that starting from $N_0=\alpha N$, a.a.s. it takes the rumor $\Omega(\log N)$ rounds to die out. 

\section{Illustrations of the model} 
\label{illustration}
As we discussed, for the purpose of the present paper it is realistic to assume that parameters $b$ and $d$ are relatively small constants, say $b\le 0.25$ and $d\le 0.1$. Furthermore, since the behavior of the process is a function of several parameters, it is sensible to fix some parameters to study the impact of the others in a more transparent set up. The range that the value of a fixed parameter is chosen from can be set according to the potential application of the model.

Nevertheless, it would be interesting to study the $m$-rumor spreading model for larger values of $d$ and $b$ in future work. In this section, we try to illustrate the behavior of the process in this setting building on some approximate arguments. However, the rigorous analysis is left for future work.

We first introduce some notations. Let the \emph{density} of Believers be the ratio of the number of Believers to $N$. Furthermore, let $f(P)$ denote the expected number of Believers in the next round by assuming that the density of Believers in the current round is equal to $P$, i.e., there are $PN$ Believers. 

Let us first consider the case of $m=1$ and $r=2$. Note that each Believer becomes Indifferent independently with probability $d$. Thus in expectation a $d$ fraction of Believers will become Indifferents. (Actually, there are $s$ Seeds who remain Believers regardless of the choice of $d$. However, since $s$ is a small constant, this can be ignored for our rough argument here.) Furthermore, Others who are Indifferents constitute a $(1-b-P)$ fraction of the community. Each of them shares its room with a Believer, and consequently will become a Believer, with probability $P$. Therefore, we can ``approximate'' the value of $f(P)$ to be
\[
f(P)=(1-d)P+(1-b-P)P.
\]
Function $f(P)$ has a unique fixed point $1-b-d$ for $P\in(0,1-b)$. More accurately, $f(P)<P$ for $P\in (0,1-b-d)$ and $P<f(P)$ for $P\in(1-b-d,1-b)$. See Figure~\ref{fig4} (left). Therefore, we expect the process to converge to a state with Believer density almost $1-b-d$. It is easy to check that this is consistent with our results in Theorem~\ref{theorem 1} part (i), which holds for $b\le 0.25$ and $d\le 0.1$.
\begin{figure}[!ht]
\centering
\includegraphics[scale=0.5]{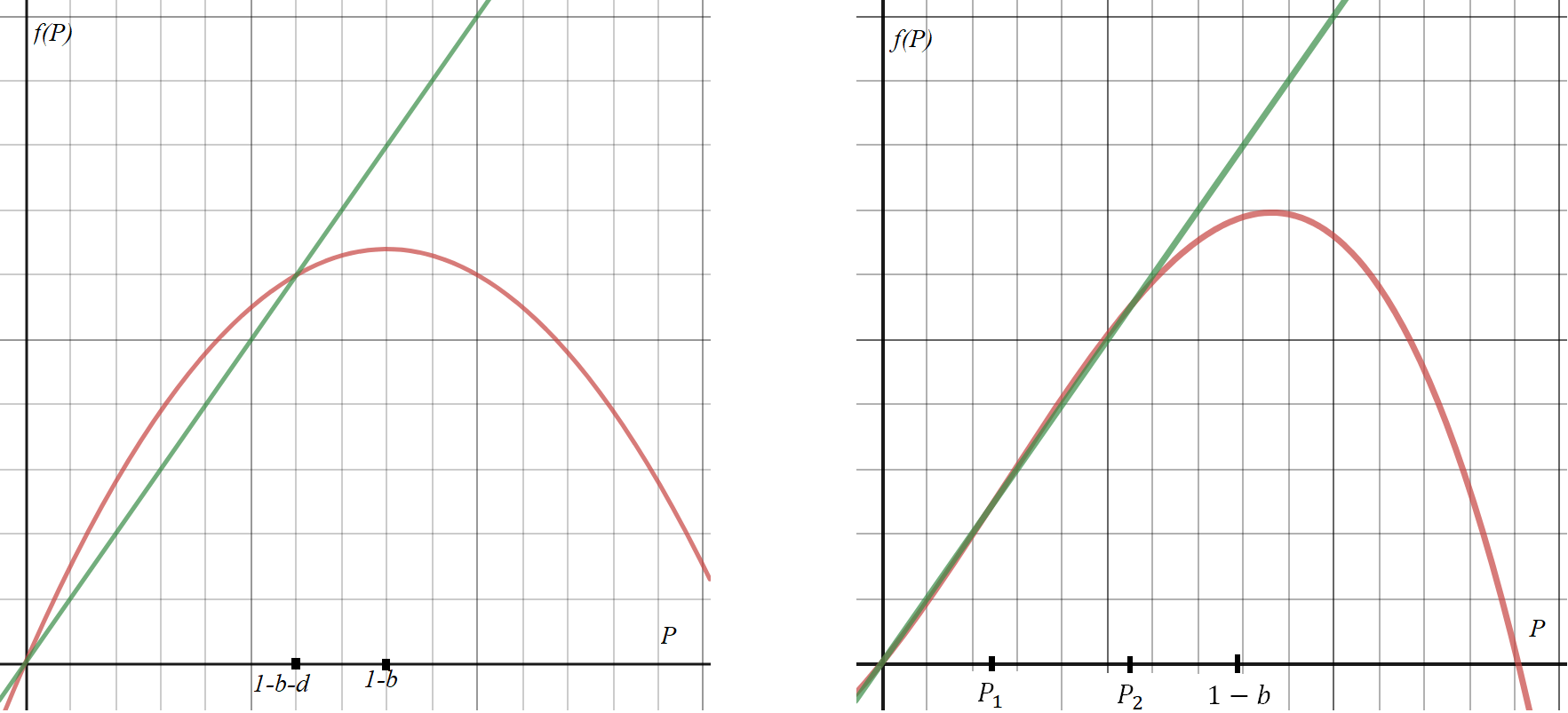}
\caption{(left) function $f(P)$ has a unique fixed point for $P\in(0,1-b)$ (right) function $f(P)$ has two fixed points $P_1$ and $P_2$ for $P\in(0,1-b)$.}
\label{fig4}
\end{figure}

Now, we consider the case of $m=2$ and $r=3$. Note that each Believer becomes Indifferent with probability $d$. (We are again skipping Seeds) Furthermore, an Other who is Indifferent will become Believer if it shares its room with two Believers. Thus, by an argument similar to above, we can ``approximate'' $f(P)$ to be
\[
f(P)=(1-d)P+(1-b-P)P^2.
\]
(This is just an approximation since we assume that each seat is occupied by a Believer with probability $P$ \emph{independently}, which is not true.)

Function $f(P)$ in this setting has two fixed points for $P\in (0,1-b)$. More precisely, $f(P)<P$ for $P\in (0,P_1)$, $f(P)>P$ for $P\in (P_1,P_2)$ and $f(P)<P$ for $P\in(P_2,1-b)$, where
\[
P_1=\frac{(1-b)+\sqrt{(1-b)^2-4d}}{2},\quad
P_2=\frac{(1-b)+\sqrt{(1-b)^2-4d}}{2}.
\]
Therefore, if the initial density of Believers is less than $P_1$, we expect the rumor to die out. However, if the initial density is larger than $P_1$, we expect the process to converge to a state with Believer density of almost $P_2$. See Figure~\ref{fig4} (right).

\section{Conclusion}
\label{conclusion}
We have shown that within the $m$-rumor spreading model, switching from $m=1$ to $m\ge 2$ triggers a drastic qualitative change in the spreading process. More precisely, when $m=1$ even a small group of Believers manage to convince a large part of the community very quickly, but for $m\ge 2$ even a substantial fraction of Believers may not prevent the rumor from dying out after a few update rounds. Our results shed a light on why a given rumor spreads within a social group and not in another as noted in Section~\ref{real cases} regarding some recent real cases of rumor spreading.

At this stage it is worth stressing that our model like other models, is not the reality. However, the common aim is to grasp some essential traits of the reality despite making crude approximations. We hope our findings shed some additional light on the understanding of rumor spreading phenomena.

\paragraph{Acknowledgment.} The first author is thankful to Bernd G\"artner for several stimulating conversations on a primary variant of the model.
\bibliographystyle{acm} %
\bibliography{references}
\end{document}